%File: formatting-instructions-latex-2024.tex
%release 2024.0
\documentclass[letterpaper]{article} % DO NOT CHANGE THIS
\usepackage{aaai24}  % DO NOT CHANGE THIS
\usepackage{times}  % DO NOT CHANGE THIS
\usepackage{helvet}  % DO NOT CHANGE THIS
\usepackage{courier}  % DO NOT CHANGE THIS
\usepackage[hyphens]{url}  % DO NOT CHANGE THIS
\usepackage{graphicx} % DO NOT CHANGE THIS
\urlstyle{rm} % DO NOT CHANGE THIS
  % DO NOT CHANGE THIS
\usepackage{natbib}  % DO NOT CHANGE THIS AND DO NOT ADD ANY OPTIONS TO IT
\usepackage{caption} % DO NOT CHANGE THIS AND DO NOT ADD ANY OPTIONS TO IT
\frenchspacing  % DO NOT CHANGE THIS
\setlength{\pdfpagewidth}{8.5in}  % DO NOT CHANGE THIS
\setlength{\pdfpageheight}{11in}  % DO NOT CHANGE THIS

\setcounter{secnumdepth}{2}

% to compile a preprint version, e.g., for submission to arXiv, add add the
% [preprint] option:
%     \usepackage[preprint]{neurips_2022}

% to compile a camera-ready version, add the [final] option, e.g.:
%     \usepackage[final]{neurips_2022}

% to avoid loading the natbib package, add option nonatbib:
%    \usepackage[nonatbib]{neurips_2022}

%\usepackage[utf8]{inputenc} % allow utf-8 input
%\usepackage[T1]{fontenc}    % use 8-bit T1 fonts
%\usepackage{hyperref}       % hyperlinks
%\usepackage{url}            % simple URL typesetting
\usepackage{booktabs}       % professional-quality tables
\usepackage{amsfonts}       % blackboard math symbols
\usepackage{nicefrac}       % compact symbols for 1/2, etc.
\usepackage{microtype}      % microtypography
\usepackage{xcolor}         % colors

\usepackage[english]{babel}
\usepackage[utf8]{inputenc}
\usepackage{acro}
\usepackage{adjustbox}
\usepackage[linesnumbered,ruled,lined,noend]{algorithm2e}
% Fix for line number alignment: https://tex.stackexchange.com/questions/476579/how-to-align-line-numbers-horizontally-with-package-algorithm2e
\makeatletter
    \patchcmd\algocf@Vline{\vrule}{\vrule \kern-0.4pt}{}{}
    \patchcmd\algocf@Vsline{\vrule}{\vrule \kern-0.4pt}{}{}
\makeatother
\SetKwComment{Hline}{}{\vspace{-3mm}\textcolor{gray}{\hrule}\vspace{1mm}}
\definecolor{darkgrey}{gray}{0.3}
\definecolor{commentcolor}{gray}{0.3}
\SetKwComment{Comment}{\color{commentcolor}$\triangleright$\ }{}
\SetCommentSty{}
\SetNlSty{}{\color{darkgrey}}{}
\setlength{\algomargin}{4mm}
\SetKwProg{Fn}{function}{}{}
\SetKwProg{Subr}{subroutine}{}{}

\makeatletter
\let\cref@old@stepcounter\stepcounter
\def\stepcounter#1{%
  \cref@old@stepcounter{#1}%
  \cref@constructprefix{#1}{\cref@result}%
  \@ifundefined{cref@#1@alias}%
    {\def\@tempa{#1}}%
    {\def\@tempa{\csname cref@#1@alias\endcsname}}%
  \protected@edef\cref@currentlabel{%
    [\@tempa][\arabic{#1}][\cref@result]%
    \csname p@#1\endcsname\csname the#1\endcsname}}
\makeatother

\usepackage{amsfonts}
\usepackage{amsmath}
\usepackage{amsthm}
\usepackage{blindtext}
\usepackage{bm}
\usepackage{centernot}
\usepackage{changepage}
\usepackage{csquotes}
\usepackage{float}
\usepackage{lipsum}
\usepackage{mathtools}
\usepackage{makecell}
\usepackage{multicol}
\usepackage{ragged2e}
\usepackage{rotating}
\usepackage{scalerel}
\usepackage{stackengine}
\usepackage{subcaption}
\usepackage{tablefootnote}
\usepackage{tabularx}
\usepackage{thm-restate}
\usepackage{thmtools}
\usepackage{wasysym}
\usepackage{xargs}
\usepackage{xspace}
\usepackage{xparse}
\usepackage{xpatch}
\usepackage{url}            % simple URL typesetting
\usepackage{nicefrac}       % compact symbols for 1/2, etc.
\usepackage{todonotes}

%\usepackage{showframe} % draws rectangles

% Glossary
%\usepackage[style=long,nonumberlist,sort=def]{glossaries}
 % Do not use glossary

%\usepackage[top=1.5cm, bottom=1.5cm, outer=5cm, inner=2cm, heightrounded, marginparwidth=2.5cm, marginparsep=2cm]{geometry}

\graphicspath{
  {./},
  {./images/},
  {../images/},
  {../../images},
}

%\usepackage{microtype}
%\usepackage{graphicx}
%\usepackage{booktabs} % for professional tables

% VOJTA-VERSION. I commented out the rest in case I messed up anything

% TEXT BLOCKS ---------------------------------------------

\newtheorem{theorem}{\protect\theoremname}

\providecommand{\corollaryname}{Corollary}
\providecommand{\claimname}{Claim}
\providecommand{\definitionname}{Definition}
\providecommand{\lemmaname}{Lemma}
\providecommand{\notationname}{Notation}
\providecommand{\remarkname}{Remark}
\providecommand{\problemname}{Problem}
\providecommand{\propositionname}{Proposition}
\providecommand{\examplename}{Example}
\providecommand{\theoremname}{Theorem}
\providecommand{\conjecturename}{Conjecture}
\providecommand{\experimentname}{Experiment}

% MATH FONTS -------------------------------------------

\DeclareMathAlphabet{\mathpzc}{OT1}{pzc}{m}{it}
\DeclareMathSymbol{\shortminus}{\mathbin}{AMSa}{"39}

% MACROS -----------------------------------------------

\newcommand{\R}{\mathbb{R}}
\newcommand{\mc}{\mathcal}

% Set

% Function

% Matrix

% Vector
%\renewcommand{\vv}[1]{\bm{#1}}
\newcommand{\vv}[1]{%
 \bm{#1}
}

\newcommand{\norm}[1]{\left\lVert#1\right\rVert_2}  % Norm ||x||
\newcommand{\inftynorm}[1]{\left\lVert#1\right\rVert_\infty}  % Norm ||x||

\newcommand{\inner}[2]{\langle #1, #2 \rangle}    % Inner product
     % Trace
             % Set cardinality

                % Support
          % Cover of a set
     % Minimal support
\newcommand{\expl}{\mathpzc{expl}\!}              % Exploitability
                           % Game value
                    % Game value
   % value ("v-value")
   % action value ("q-value")

        % Episode

                            % Search
     % Stateless search

  % Stateful search
                  % Soundness of search

% \newcommand{\pl}{n}                               % player
             % opponent
         % augmented inf. sets I^aug
             % range (lin alg)
       % Counting set of 1...n
             % Average
%\newcommand{\opponent}{{\textnormal{opp}_i}}      % opponent
                  % best response
                % counterfactual best response
                % best response value
              % counterfactual best response value

        % Counting set of 1...n
%\newcommand{\vvxy}{\vv x, \vv y}                  % Counting set of 1...n

\newcommand{\defword}[1]{\textbf{\boldmath{#1}}}

   % long name for FOG
 % long name for FOGs

% \newcommand{\policy}{\pi}

% \newcommand{\prefix}{\sqsubset}                   % history prefix
% \newcommand{\prefixeq}{\sqsubseteq}               % history prefix or equal
             % public obs
    % private obs
     % positional network
     % sparse (particle) network

% \newcommand{\infostateTree}[1]{\ensuremath{\mc S_{#1}}}

% \newcommand{\priSet}{\ensuremath{I_\pl}}

% \newcommand{\privSet}{\ensuremath{I_\pl}}

        % \v is already taken
   % should be followed by _pl
  % all of beliefs, or just use \belief

% \newcommand{\trunk}{{\ensuremath{Tr}}}

% \newcommand{\particles}{{H}}
% \newcommand{\particle}{{h}}

% \newcommand{\infoset}{infoset\xspace}
% \newcommand{\infosets}{infosets\xspace}
% \newcommand{\Infoset}{Infoset\xspace}
% \newcommand{\Infosets}{Infosets\xspace}
% \newcommand{\publicstate}{public state\xspace}
% \newcommand{\publicstates}{public states\xspace}
% \newcommand{\Publicstate}{Public state\xspace}
% \newcommand{\Publicstates}{Public states\xspace}

% games

%\newcommand{\fog}{\ensuremath{\mc G^{\times}}}

\newcommand{\bandit}{\ensuremath{m}}
\newcommand{\task}{\ensuremath{g}}
\newcommand{\tasks}{\ensuremath{G}}
\newcommand{\predictor}{\ensuremath{\pi}}
\newcommand{\prediction}{\ensuremath{\vv{p}}}

\newcommand{\reward}{\ensuremath{\vv{x}}}
\newcommand{\regret}{\ensuremath{\vv{r}}}
\newcommand{\cumregret}{\ensuremath{\vv{R}}}
\newcommand{\predregret}{\ensuremath{\vv{\xi}}}
\newcommand{\maxutildiff}{\ensuremath{\Delta_{\textnormal{max}}}}

% OPERATORS --------------------------------------------
% expected value
% variance value

% SYMBOLS ----------------------------------------------
%\newcommand{\wye}{\mathbin{\tikz[x=1ex,y=1ex]{\draw[line width=.1ex] (0,0)--(45:1)--++(-45:1) (45:1)--++(0,1);}}}

% TO DO LIST ---------------------------------------------
%\newlist{todolist}{itemize}{2}
%\setlist[todolist]{label=$\square$}
%\usepackage{pifont}
%\newcommand{\cmark}{\ding{51}}%
%\newcommand{\xmark}{\ding{55}}%
%\newcommand{\done}{\rlap{$\square$}{\raisebox{2pt}{\large\hspace{1pt}\cmark}}%
%\hspace{-2.5pt}}
%\newcommand{\wontfix}{\rlap{$\square$}{\large\hspace{1pt}\xmark}}

% FORMATTING ----------------------------------------------
\newcommand*\cleartoleftpage{%
  \clearpage
  \ifodd\value{page}\hbox{}\newpage\fi
}

% COLORS --------------------------------------------------

%\definecolor{insign}{rgb}{0.5,0.5,0.5}
%\definecolor{insignwin}{rgb}{0.2, 0.5, 0.2}
%\definecolor{insignlose}{rgb}{0.5, 0.2, 0.2}
\definecolor{insignwin}{rgb}{0.5,0.5,0.5}
\definecolor{insignlose}{rgb}{0.5,0.5,0.5}
%\definecolor{win}{rgb}{0,1,0}
%\definecolor{lose}{rgb}{1,0,0}
\definecolor{win}{rgb}{0,0,0}
\definecolor{lose}{rgb}{0,0,0}

% Vojta stuff

\newcommand{\strategy}{\vv{\sigma}}

\usepackage{tikz} % for drawing (GT graphs in particular)
\usetikzlibrary{shapes} % for triangle shaped nodes
\usetikzlibrary{calc} % for calculating coordinates

%%%%%%%%%%%%%%% Node style definitions %%%%%%%%%%%%%%%%%%%%%%
% node sizes setup
\newlength{\nodesize}
\setlength{\nodesize}{2.5em}
% Player colors
\colorlet{chance_color}{black}
\colorlet{pl0_color}{chance_color}
\colorlet{chance_text}{white}
\colorlet{pl1_color}{magenta!50}
\colorlet{pl2_color}{green!50!lime!60}
% player 1 and 2, chance, terminal nodes
\tikzset{
basenode/.style = {draw,
inner sep = 0.1em,
minimum size = \nodesize
},
playernode/.style={basenode,
shape = regular polygon,
regular polygon sides = 3
},
pl1/.style={playernode, fill=pl1_color},
pl2/.style={playernode, fill=pl2_color, shape border rotate=180},
chance/.style = {basenode,
fill=pl0_color, text=chance_text,
circle,
minimum size=0.7*\nodesize,
%%% uncomment `anchor=north` to get minimizer and maximizer aligned
%%% but it requires adding [anchor=north] to minimizers children
%%% and breaks when having the tree grow in different direction
%		anchor=north
},
terminal/.style = {basenode,
draw=none,
outer sep=0,
minimum size = 0.6\nodesize
}
}
%%%%%%%%%%%%%%%%%%%%% TIKZ %%%%%%%%%%%%%%%%%%%%%%%

\title{Learning not to Regret}

% The \author macro works with any number of authors. There are two commands
% used to separate the names and addresses of multiple authors: \And and \AND.
%
% Using \And between authors leaves it to LaTeX to determine where to break the
% lines. Using \AND forces a line break at that point. So, if LaTeX puts 3 of 4
% authors names on the first line, and the last on the second line, try using
% \AND instead of \And before the third author name.

\author{
    David Sychrovsk\'{y}$^{1,2}$, 
    Michal \v{S}ustr$^{2,5}$,  
    Elnaz Davoodi$^3$,\\
    Michael Bowling$^4$, 
    Marc Lanctot$^3$, 
    Martin Schmid$^{1,5}$
}
\affiliations{
    $^1$Department of Applied Mathematics, Charles University\\
    $^2$Artificial Intelligence Center, Czech Technical University\\
    $^3$Google DeepMind\\
    $^4$Department of Computing Science, University of Alberta\\
    $^5$EquiLibre Technologies
    
    {\tt sychrovsky@kam.mff.cuni.cz,
    michal.sustr@aic.fel.cvut.cz,
    schmid@equilibretechnologies.com}
}

\begin{document}

\maketitle

\begin{abstract}
  % Next steps:
% MCTS nets
% general sum games, multi player
% selfplay
% abstractions kuhn-leduc-river
% starting from arbitrary points, arbitrary regret minimizing opponents
% online meta learning combination
% discounting regret in online learning
% presenting a benchmark for meta learning in games

% But:
% what is the killer app?

% ---

The literature on game-theoretic equilibrium finding predominantly focuses on single games or their repeated play.
Nevertheless, numerous real-world scenarios feature playing a game sampled from a distribution of similar, but not identical games, such as playing poker with different public cards or trading correlated assets on the stock market. 
As these similar games feature similar equilibra, we investigate a way to accelerate equilibrium finding on such a distribution.
We present a novel ``learning not to regret'' framework, enabling us to meta-learn a regret minimizer tailored to a specific distribution. 
Our key contribution, Neural Predictive Regret Matching, is uniquely meta-learned to converge rapidly for the chosen distribution of games, while having regret minimization guarantees on any game.
We validated our algorithms' faster convergence on a distribution of river poker games. 
Our experiments show that the meta-learned algorithms outpace their non-meta-learned counterparts, achieving more than tenfold improvements.

\end{abstract}

\section{Introduction}
\label{sec: intro}

Regret minimization, a fundamental concept in online convex optimization and game theory, plays an important role in decision-making algorithms~\citep{nisan2007algorithmic}.
In games, a common regret minimization framework is to cast each player as an independent online learner. This learner interacts repeatedly with the game, which is represented by a black-box environment and encompasses the strategies of all other players or the game’s inherent randomness. When all the learners employ a regret minimizer, their average strategy converges to a coarse correlated equilibrium~\citep{hannan1957approximation, hart2000simple}.
Furthermore, in two-player zero-sum games, the average strategy converges to a Nash equilibrium~\citep{nisan2007algorithmic}. 
Regret minimization has become the key building block of many algorithms for finding Nash equilibria in imperfect-information games~\citep{bowling2015heads,DeepStack,brown2018superhuman,brown2020combining,Pluribus,schmid2021player}.

While these algorithms made progress in single game playing, in many real-world scenarios, players engage in more than just one isolated game. For instance, they might play poker with various public cards, solve dynamical routing problems, or trade correlated assets on the stock market. 
These games, while similar, are not identical and can be thought of as being drawn from a distribution. Despite its relevance, this setting has been largely unexplored, with a few recent exceptions such as \cite{harris2022meta,zhang2022no}.

In this work, we shift focus to this distributional setting. The black-box environment which the learners interact with corresponds to a game \emph{sampled from a distribution}. This perspective aligns with the traditional regret minimization framework, but with an added twist: the game itself is sampled. Our goal is to reduce the \emph{expected} number of interactions needed to closely approximate an equilibrium of the sampled game. This is crucial both for online gameplay and offline equilibrium learning, as fewer steps directly translate to a faster algorithm.

In either the single-game or distributional settings, the worst-case convergence 
of regret minimizers against a strict adversary cannot occur at a rate faster than $O(T^{-1/2})$~\citep{nisan2007algorithmic}.
However in practice, algorithms often converge much faster than the worst-case bound suggests. 
Consider CFR$^+$ \citep{tammelin2014solving}, which empirically converges at the rate of $O(T^{-1})$ in poker games\footnote{
The strong empirical performance of the algorithm was one of the key reasons behind essentially solving Limit Texas Holdem poker, one of the largest imperfect information games to be solved to this day~\citep{bowling2015heads}.
CFR$^+$ required only $1,579$ iterations to produce the final strategy, far less than what the worst-case bound suggests.} 
despite having the same $O(T^{-1/2})$ worst case guarantees~\citep{burch2018time}.
Another example of variations in practical performance is discounted CFR with three parameters ($DCFR_{\alpha, \beta, \gamma}$), where the authors reported that they ``found the optimal choice of $\alpha, \beta$ and $\gamma$ varied depending on the specific game'' \citep{brown2019solving}.

These empirical observations are in line with no-free lunch theorems for optimization, which state that no learning algorithm can dominate across all domains \citep{wolpert1997no}.
Thus to improve performance on a domain, it is necessary to use a specialized algorithm, at the expense of deteriorating the performance outside of this domain.

A popular approach to find such algorithms is the meta-learning paradigm, namely a variant of ``learning to learn''~\citep{andrychowicz2016learning}. 
In the meta-learning framework, one learns the optimization algorithm itself.
The simplest approach is to directly parametrize the algorithm with a neural network, and train it to minimize regret on the distribution of interest.
While the meta-learned network can quickly converge in the domain it has been trained on (e.g. poker games), it can be at the cost of performance (or even lack of convergence)  out-of-distribution.
This is because the neural network is not necessarily a regret minimizer.

To provide the convergence guarantees, we introduce meta-learning within the predictive regret framework~\citep{farina2021faster}.
Predictive regret minimization has convergence guarantees regardless of the prediction, while a better prediction guarantees lower regret, and a perfect prediction results in zero regret~\citep{farina2021faster}.
This results in an algorithm that combines the best of both worlds -- fast convergence in the domain in question while providing general convergence guarantees.

A particularly interesting application of our approach is when the resulting regret minimizer is used in an online search algorithm~\citep{DeepStack, brown2018superhuman, schmid2021player}.
When the agent is deployed to face an opponent in chess, poker or other games, it has a limited time to make a decision. 
The agent needs to minimize regret within its search tree as quickly as possible --- that is, with as few iterations as possible.
This is because a single iteration evaluates the leaf nodes of a search tree using a value function, which is typically represented by a slow-to-compute neural network.
In this context, the critical measure is the speed during the actual deployment time and online search, that is, when facing the opponent.
The offline computation is typically used to learn high quality value functions to be used within search and can take even long time.
With our method, one can now also use the offline computation to meta-learn the regret minimizer itself, resulting in substantially faster convergence during the play time.

In experiments, we first evaluate our algorithms on a distribution of matrix games to understand what the algorithms learn.
Next, we turn our attention to search with value functions in a sequential decision setting.
We show that for a distribution over river poker games, our meta-learned algorithms outpace their non-meta-learned counterparts, achieving more than tenfold improvements.

\section{Prior Work}

Regret minimization is a powerful framework for online convex optimization \citep{zinkevich2003online}, with regret matching as one of the most popular algorithms in game applications \citep{hart2000simple}.
Counterfactual regret minimization allows to use that framework in sequential decision making, by decomposing the full regret to individual states~\citep{zinkevich2008regret}.
A recently introduced extension of regret matching, the predictive regret matching~\citep{farina2021faster} was shown to significantly outperform prior regret minimization algorithms in self-play across a large selection of games.
The authors also provided a close connection between the prediction and the regret, which offers additional insight into the algorithm and is a clear inspiration for our work. 

Meta-learning has a long history when used for optimization~\citep{schmidhuber1992learning, schmidhuber1993neural, thrun1996explanation, andrychowicz2016learning}.
This work rather considers meta-learning in the context of regret minimization.
Many prior works explored modifications of regret matching to speed-up its empirical performance in games, such as
CFR+ \cite{tammelin2014solving},
DCFR \cite{brown2019solving},
Lazy-CFR \cite{zhou2018lazy},
ECFR \cite{li2020solving} or
Linear CFR \cite{brown2019deep}. 
However, as the no-free lunch theorems for optimization state, no (learning) algorithm can dominate across all domains~\cite{wolpert1997no}.
Therefore, to improve performance on a specific domain, it is necessary to use a specialized algorithm, at the expense of deteriorating the performance outside of this
domain.

We thus turn to meta-learning the regret minimizers. It was shown that similar games have similar equilibria, justifying the use of meta-learning in games to accelerate equilibrium finding~\citep{harris2022meta}.
A key difference between our and prior works is that they primarily consider settings where the game utilities come from a distribution, rather than sampling the games themselves.
Thus, one of their requirements is that the strategy space itself must be the same.
In~\citep{azizi2022meta}, they consider bandits in Bayesian settings.
In~\citep{harris2022meta}, the authors ``warm start'' the initial strategies from a previous game, making the convergence provably faster. This approach is ``path-dependant'', in that it depends on which games were sampled in the past.
Both works are fundamentally different from ours, as they use meta-learning online, while we are making meta-learning preparations offline.

To our best knowledge, the most similar to our offline meta-learning setting is AutoCFR~\citep{xu2022autocfr}. They are not restricted to the same strategy spaces in games like previous works, as they use evolutionary search for an algorithm that is local to each decision state.
They search over a combinatorial space, defined by an algebra which generalizes CFR family of algorithms, to find an algorithm that performs well across many games.
Our approach rather learns a neural network via gradient descent to perform the regret minimization, allowing us to learn any function representable by the network architecture. 
Furthermore, unlike AutoCFR, we provide strong regret minimization guarantees.

We also give a quick overview of the recent work on search with value functions, as we use regret minimization in this context in our experiments.
The combination of decision-time search and value functions has been used in the remarkable milestones where computers bested their human counterparts in challenging games --- DeepBlue for Chess~\citep{campbell2002deep} and AlphaGo for Go~\citep{silver2016mastering}.
This powerful framework of search with (learned) value functions has been extended to imperfect information games~\citep{schmid2021search}, where regret minimization is used within the search tree.
Regret minimization has quickly become the underlying equilibrium approximation method for search~\citep{DeepStack, brown2018superhuman, zarick2020unlocking, serrino2019finding, brown2020combining, schmid2021player}.

\section{Background}
We begin by describing the regret minimisation framework~\cite{nisan2007algorithmic}. An \defword{online algorithm} $\bandit$ for the regret minimization task repeatedly interacts with an unknown \defword{environment} $\task$ through available actions $A$, receiving a vector of per-action rewards $\reward$.
The goal of regret minimization algorithm is then to maximize its hindsight performance (i.e. to minimize regret).

% The environment $\task$ can be influenced by an adversary or it can even be unknown. 
% The goal of the algorithm $\bandit$ is to select strategy at each step such that the average accumulated reward is asymptotically no smaller then that of any single action in hindsight.\todo[inline]{This is not true, regret grows}
%We employ the online algorithms at each information state, as in prior work on counterfactual regret minimization. This is valid because the regret of the full trajectory can be bounded by the sum of regrets in each information state~\cite{zinkevich2008regret}. 

Formally, at each step $t\le T$, the algorithm submits a \defword{strategy} $\strategy^t$ from a probability simplex $\Delta^{|A|}$ and observes the subsequent \defword{reward} $\reward^t \in \R^{|A|}$ returned from the environment $\task$.
The rewards are computed with an unknown function concave in $\strategy$ and are bounded. We denote by $\maxutildiff$ the difference between the highest and lowest reward the environment can produce.
The difference in reward obtained under $\strategy^t$ and any fixed action strategy is measured by the instantaneous \defword{regret} $\regret(\strategy^t, \reward^t) = \reward^t - \inner{\strategy^t}{\reward^t} \vv{1}$.
 A sequence of strategies and rewards, submitted by algorithm $\bandit$ and returned by environment $\task$, up to a horizon $T$, is
\begin{equation}\label{eq:strategy-reward-sequence}
    \reward^0 \rightarrow \strategy^1 \rightarrow
    \reward^1 \rightarrow \strategy^2 \rightarrow
    \dots \rightarrow
    \reward^{T-1}\rightarrow \strategy^{T} \rightarrow
    \reward^{T},
\end{equation}
where we set $\reward^0 = \mathbf{0}$ for notational convenience (see also Figure~\ref{fig: strategy reward sequence}).
The \defword{cumulative regret} over the entire sequence is
\begin{equation*}
    \cumregret^T = 
    \sum_{t=1}^T \regret(\strategy^{t}, \reward^t).
\end{equation*}

The algorithm $\bandit$ is a regret minimizer, if the \defword{external regret}
$R^{\text{ext}, T} = \inftynorm{\vv{R}^T}$
grows sublinearly in $T$ for an arbitrary sequence of rewards $\{\reward^t\}_{t=1}^T$. Then the average strategy $\overline{\strategy}^t = \frac{1}{t}\sum_{\tau=1}^t \strategy^\tau$ converges to a coarse correlated equilibrium~\citep{nisan2007algorithmic}. 

\medskip

Finally, we define \defword{exploitability} of a strategy $\strategy$ (i.e. the gap from a Nash equilibrium) as
\begin{equation*}
    \expl(\strategy) = 
    \max_{\strategy^*} \min_{\reward} \inner{\strategy^*}{\reward(\strategy^*)}
    -\min_{\reward} \inner{\strategy}{\reward(\strategy)},
\end{equation*}
where $\reward(\strategy)$ is the reward vector admissible by the environment as a response to strategy $\strategy$.
Note that this exactly corresponds to the standard definition of exploitability of a player's strategy in a two-player zero-sum game when playing with an environment controlled by an adversary.

\section{Learning not to Regret}

We first describe the meta-learning framework for regret minimization.
Then we introduce two variants of meta-learned algorithms, with and without regret minimization guarantees.

\subsection{Meta-Learning Framework}
On a distribution of regret minimization tasks $\tasks$, we aim to find an online algorithm $\bandit_\theta$ with some parameterization $\theta$ that efficiently minimizes the expected external regret after $T$ steps. The expected external regret of $\bandit_\theta$ is
\begin{align}
    \label{eq: loss}
    \mc{L}(\theta) &=
    \mathop{\mathbb{E}}_{\task \sim \tasks}
    \left[
    R^{\text{ext},T}\right]
    =
    \mathop{\mathbb{E}}_{\task \sim \tasks}
    \left[
    \max_{a\in A}
    \sum_{t=1}^T \regret_a\left(\strategy^{t}_\theta, \reward^t \right)\right],
\end{align}
where $\strategy^{t}_\theta$ is the strategy selected at step $t$ by the online algorithm $\bandit_\theta$.
We train a recurrent neural network parameterized by $\theta$ to minimize \eqref{eq: loss}. 
By utilizing a recurrent architecture we can also represent algorithms that are history and/or time dependent. This dependence is captured by a hidden state $\vv{h}$ of the recurrent network. See Section~\ref{sec:experiments} for details. 

The choice to minimize external regret in particular is arbitrary. This is because the rewards $\{\reward^\tau\}_{\tau=1}^{T}$ that come from the environment are constant\footnote{This is because the environment is a black-box, i.e. how the reward depends on the chosen strategy is unknown to $\bandit_\theta$.\label{fn: detach reward}} w.r.t. $\theta$ and the derivative of any element of the cumulative regret vector $\vv{R}^T$ is thus the same, meaning\footnote{When recurrent architecture is used, the strategy also depend on strategies used in previous steps, intruding extra terms.}
\begin{equation*}
    \frac{\partial\mc{L}}{\partial \strategy_\theta^t} = 
    \frac{\partial}{\partial \strategy_\theta^t}
    \mathop{\mathbb{E}}_{\task \sim \tasks}
    \left[
    \sum_{\tau=1}^T -\inner{\strategy^{\tau}_\theta}{\reward^\tau}\right] = 
    -\mathop{\mathbb{E}}_{\task \sim \tasks}
    \left[\reward^t\right].
\end{equation*}
Consequently, if the objective~\eqref{eq: loss} is reformulated using other kinds of regrets, it would result in the same meta-learning algorithm. 
This is because regrets measure the difference between reward accumulated by some fixed strategy, and by the algorithm $\bandit_\theta$.
Since the former is a constant at meta-train time, minimizing \eqref{eq: loss} is equivalent to maximizing the reward $\inner{\strategy^{t}}{\reward^t}$ the algorithm $\bandit_\theta$ gets at every $t\le T$ in a task $\task \sim \tasks$, given the previously observed rewards $\{\reward^\tau\}_{\tau=1}^{t-1}$.

Next, we will show two variants of the algorithm $\bandit_\theta$.

\begin{figure}[t]
\centering
    \includegraphics[width=0.48\textwidth]{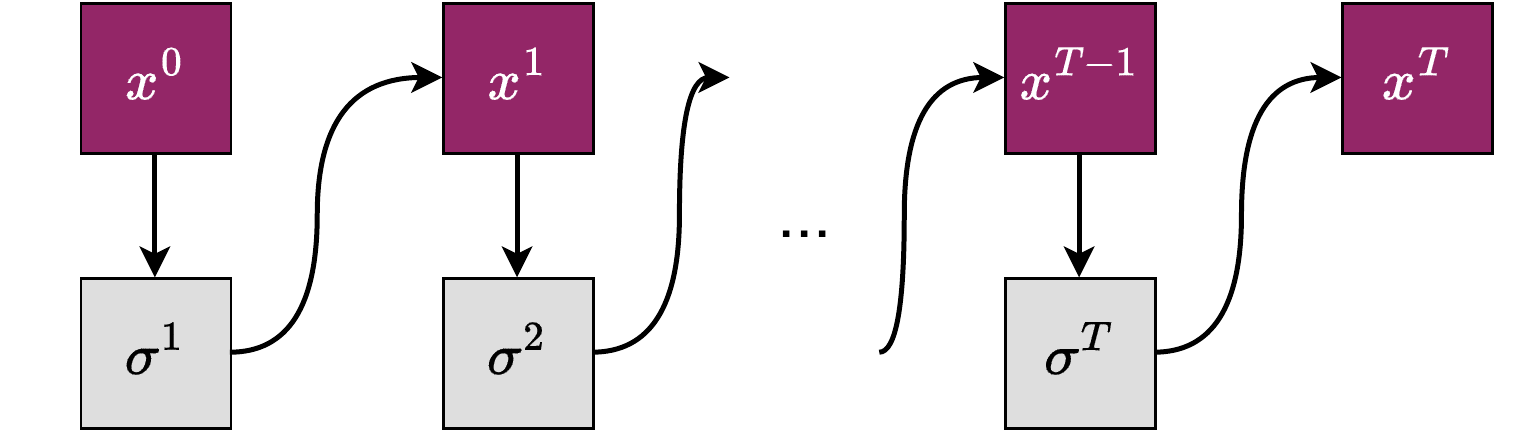}
\caption{
The sequence of strategies $\{\strategy^t\}_{t=1}^T$ submitted by an online algorithm and the rewards $\{\reward^t\}_{t=1}^T$ received from the environment. 
The reward $\reward^0 = \mathbf{0}$ initializes the algorithms to produce the first strategy $\strategy^1$.
}
\label{fig: strategy reward sequence}
\end{figure}

\begin{figure*}
\centering

\begin{subfigure}{0.45\textwidth}
    \includegraphics[width=1\textwidth]{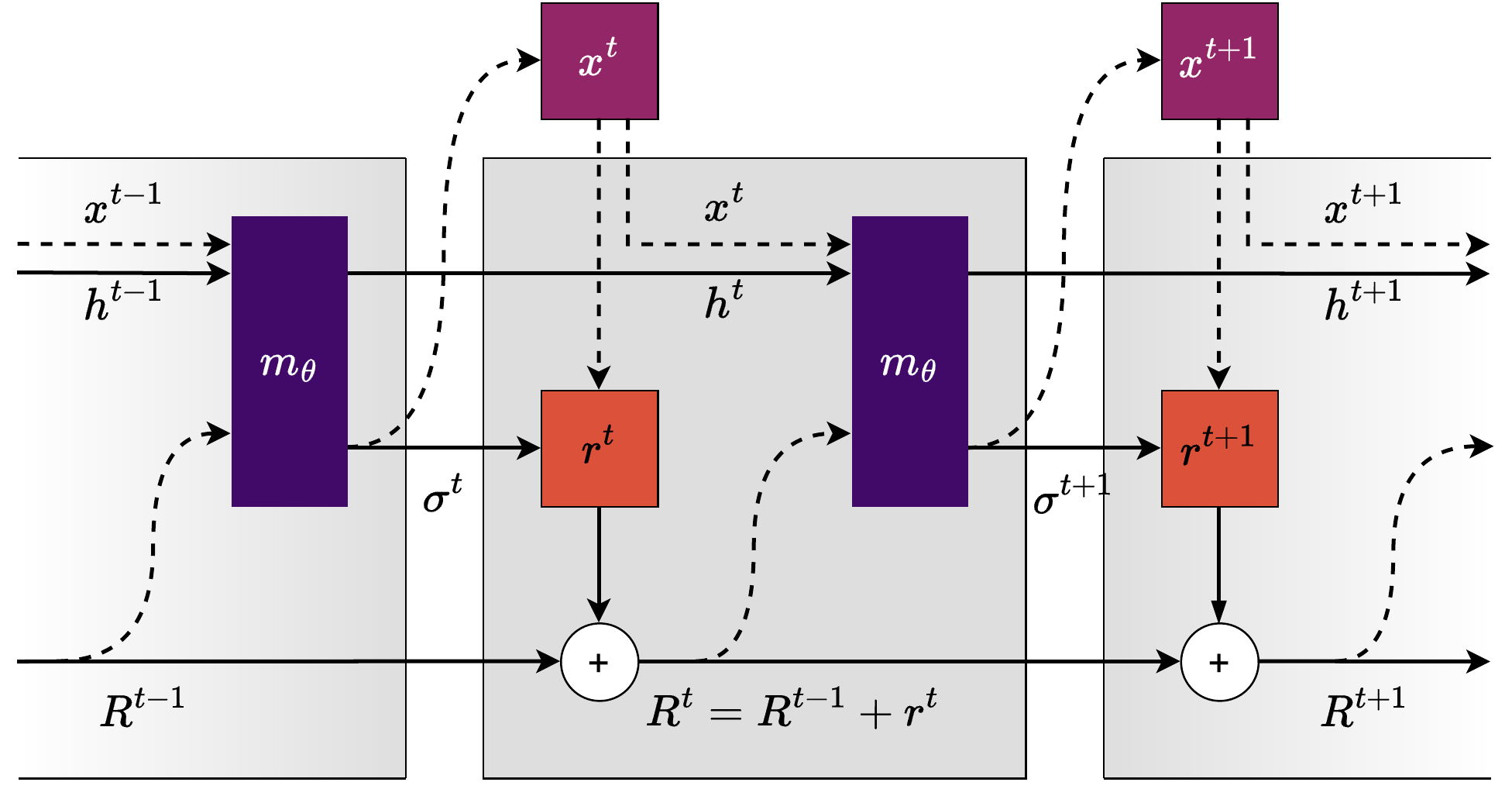}
    \caption{Neural online algorithm (NOA).}
    \label{Comp graph}
\end{subfigure}
\begin{subfigure}{0.45\textwidth}
    \includegraphics[width=1\textwidth]{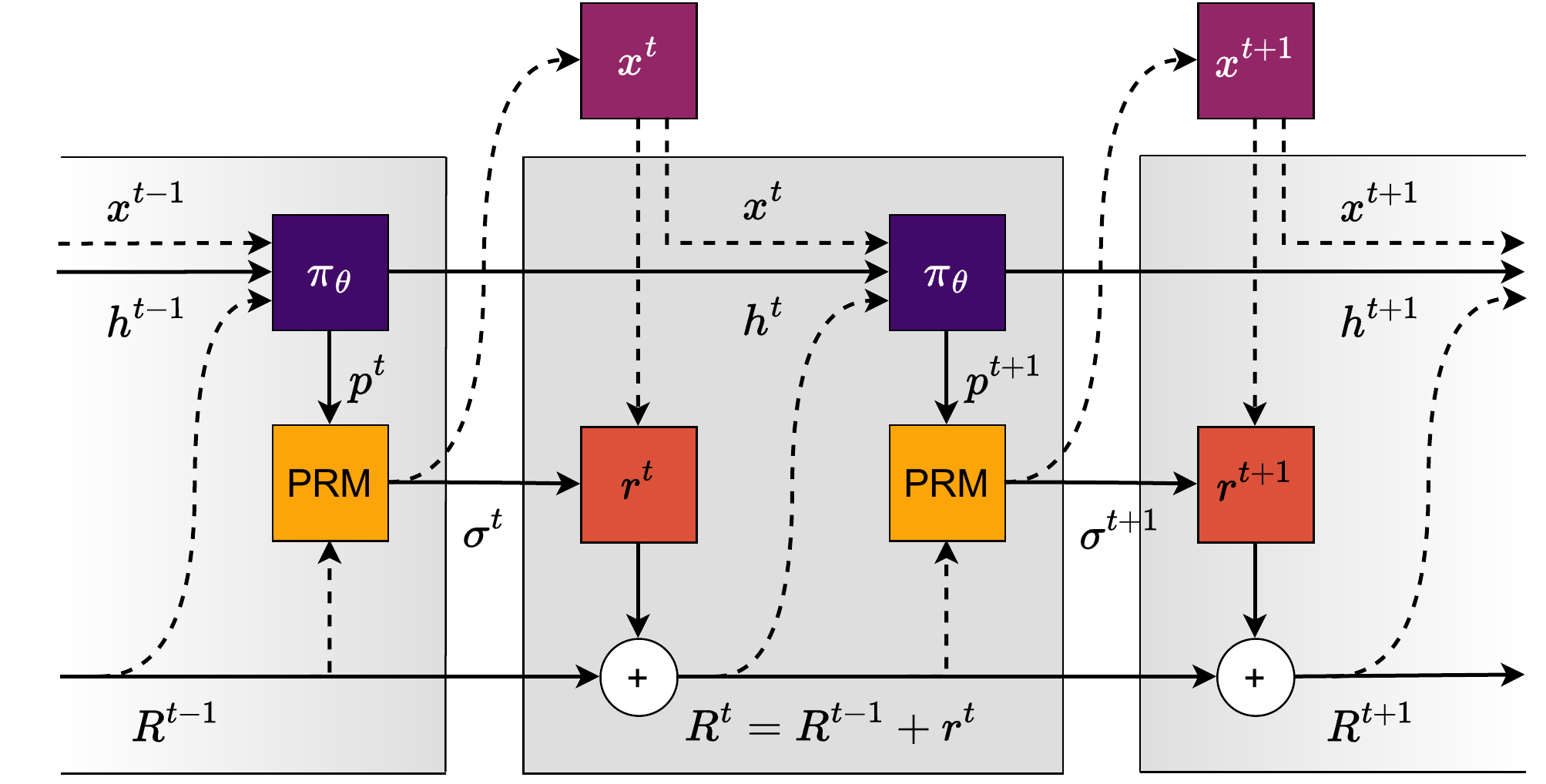}
    \caption{Neural predictive regret matching (NPRM).}
    \label{comp_graph2}
\end{subfigure}        
\caption{
Computational graphs of the proposed algorithms. The gradient flows only along the solid edges. The~$\vv{h}$ denotes the hidden state of the neural network. 
See also Figure~\ref{fig: strategy reward sequence} for visual correspondence of the strategy and reward sequence.
}
\end{figure*}

\subsection{Neural Online Algorithm}
The simplest option is to parameterize the online algorithm $\bandit_\theta$ to directly output the strategy $\strategy^t_\theta$.
We refer to this setup as neural online algorithm (NOA).

At the step $t$, the algorithm $\bandit_\theta$ receives as input\footnote{We also input additional contextual information, see Section~\ref{sec:experiments}.\label{fn: context info}} the rewards $\reward^t$ and cumulative regret $\cumregret^t$ and keeps track of its hidden state $\vv{h}^t$.
We estimate the gradient $\partial \mc L / \partial \theta$ by sampling a batch of tasks and applying backpropagation through the computation graph as shown in Figure~\ref{Comp graph}. The gradient originates in the final external regret $R^{{\rm ext}, T}$ and propagates through collection of regrets $\regret^{1 \dots T}$, the strategies $\strategy^{1 \dots T}$ and hidden states $\vv{h}^{0 \dots T-1}$. We don't allow the gradient to propagate through the rewards$^{\ref{fn: detach reward}}$ $\reward^{0 \dots T-1}$ or the cumulative regrets $\cumregret^{1 \dots T}$ entering the network. 
Thus, the only way to influence the earlier optimization steps is through the hidden states $\vv{h}^{0 \dots T-1}$ of the neural network.\footnote{This is similar to the ``learning to learn'' setup \citep{andrychowicz2016learning}\label{fn: detach regret}}

In our experiments, we observe strong empirical performance of NOA. However, NOA is not guaranteed to minimize regret.
This is because, similar to policy gradient methods, it is simply maximizing the cumulative reward $\mathbb{E}_{\task \sim \tasks}\left[\sum_{t=1}^T \reward^t\right]$,
which is not a sufficient condition to be a regret minimizing algorithm~\cite{blackwell1956analog}.

\subsection{Neural Predictive Regret Matching}
In order to get convergence guarantees, we turn to the recently introduced predictive regret matching (PRM)~\citep{farina2021faster}, see also Algorithm~\ref{algo:prm}. 
The PRM is an extension of regret matching (RM) \citep{hart2000simple} which uses an additional predictor~$\predictor: (\bullet) \rightarrow \R^{|A|}$. 
The algorithm has two functions, $\textsc{NextStrategy}$ and $\textsc{ObserveReward}$, which alternate over the sequence~\eqref{eq:strategy-reward-sequence}.
The predictor makes a prediction $\prediction^{t+1}$ of the next anticipated regret\footnote{Originally, the predictive regret was formulated in terms of next reward. However, for our application predicting next regret proved more stable as the network outputs don't mix.} 
$\regret^{t+1}$. The PRM algorithm 
incorporates $\prediction^{t+1}$ to compute the next strategy\footnote{Note the prediction can change the actual observed $\reward^{t+1}$, unless we are at a fixed point.} $\strategy^{t+1}$.
The RM algorithm can be instantiated as PRM with $\predictor = \vv{0}$. Unless stated otherwise, we use PRM with a simple predictor $\predictor: (\strategy^{t}, \reward^t) \to \prediction^{t+1} = \regret(\strategy^{t},\reward^t)$, i.e. it predicts the next observed rewards will be the same as the current ones.\footnote{This predictor was used in the original work.} 

We introduce neural predictive regret matching (NPRM), a variant of PRM which uses a predictor $\predictor_\theta$ parameterized by a recurrent neural network $\theta$. 
The predictor $\predictor_\theta$ receives as input$^{\ref{fn: context info}}$ the rewards $\reward^t$, cumulative regret $\cumregret^t$ and hidden state $\vv{h}^t$.
We train $\predictor_\theta$ to minimize Eq.~\eqref{eq: loss}, just like NOA. 
The computational graph is shown in Figure~\ref{comp_graph2}.
The output of the network $\prediction^{t+1}$ is used in $\textsc{NextStrategy}$ to obtain the strategy $\strategy^{t+1}$. Similar to NOA, the gradient $\partial \mc L / \partial \theta$ originates in the final external regret $R^{{\rm ext}, T}$ and propagates through the collection of regrets $\regret^{1 \dots T}$, the strategies $\strategy^{1 \dots T}$, the predictions $\prediction^{1 \dots T}$, and hidden states $\vv{h}^{0 \dots T-1}$. 
Again, we do not propagate the gradient through the rewards$^{\ref{fn: detach reward}}$ $\reward^{0 \dots T-1}$ or through the cumulative regrets $\cumregret^{1 \dots T}$ entering the network.$^{\ref{fn: detach regret}}$
Any time-dependence comes only through the hidden states~$\vv{h}^{0 \dots T-1}$.
It is interesting to note NPRM can learn to recover both RM and PRM as it receives all the information needed, i.e. $\reward$ and $\cumregret$.

\begin{algorithm}[t]
    \caption{Predictive regret matching\\ \citep{farina2021faster}}
    \label{algo:prm}
 
    \DontPrintSemicolon
    $\cumregret^0 \gets \vv{0} \in \R^{|A|}
    ,\ \ \reward^0 \gets \vv{0} \in \R^{|A|}
    $\;
    \Hline{}
    \Fn{\normalfont\textsc{NextStrategy}()}{
        $\displaystyle\predregret^t \gets [\cumregret^{t-1} + \prediction^{t}]^+$\;
        \textbf{if} $\|\predregret^t\|_1 > 0$ \hspace{0.2cm}
            \textbf{return} $\strategy^t \gets \predregret^t \ /\ \|\predregret^t\|_1$\;
        \textbf{else} \hspace{1.43cm}
            \textbf{return} $\strategy^t \gets $ arbitrary point in $\Delta^{|A|}$\hspace*{-1cm}\;
    }
    \Fn{\normalfont\textsc{ObserveReward}($\reward^{t}$)}{
        $\displaystyle\cumregret^t \gets \cumregret^{t-1} + \regret(\strategy^{t}, \reward^t)$\;
        $\displaystyle\prediction^{t+1} \gets \predictor(\reward^t)$\;
        \vspace{.5mm}
    }
\end{algorithm}

Importantly, we show that the cumulative regret of NPRM grows sub-linearly, making it a regret minimizer.

\begin{theorem}[Correctness of Neural-Predicting]
    \label{thm: correctness of nprm}
    Let $\alpha~\ge~0$, and $\predictor_\theta$ be a regret predictor with outputs bounded in $[-\alpha, \alpha]^{|A|}$.
    Then PRM which uses $\predictor_\theta$ is a regret minimizer.
\end{theorem}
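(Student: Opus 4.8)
The plan is to certify the regret-minimizer property by showing the external regret grows at the sublinear rate $\bigO(\sqrt{T})$, uniformly over whatever the bounded predictor $\predictor_\theta$ emits. The first move is a reduction: the quantity governing the regret-minimizer property is the best-action gap $\max_{a} \cumregret^T_a$, and since the coordinates of $[\cumregret^T]^+$ are $\max(0,\cumregret^T_a)$ we have $\max_{a}\cumregret^T_a \le \norm{[\cumregret^T]^+}$ (the $2$-norm dominates the largest nonnegative coordinate). So it suffices to bound $\norm{[\cumregret^T]^+}$.

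The technical heart is the predictive Blackwell-approachability guarantee underlying PRM. Recall that $\textsc{NextStrategy}$ sets $\strategy^t \propto [\cumregret^{t-1}+\prediction^t]^+$, and that the instantaneous regret is always orthogonal to the played strategy, $\inner{\strategy^t}{\regret^t}=0$, so the forced Blackwell condition $\inner{[\cumregret^{t-1}+\prediction^t]^+}{\regret^t}=0$ holds at every step. I would run the standard one-step potential argument on $\Phi^t = \norm{[\cumregret^t]^+}^2 = \norm{[\cumregret^{t-1}+\regret^t]^+}^2$, inserting and removing the predicted force $\prediction^t$ and using the nonexpansiveness of the positive-part map together with the orthogonality condition to cancel the cross terms. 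As shown in \citep{farina2021faster}, this telescopes to
\[
    \norm{[\cumregret^T]^+}^2 \;\le\; \sum_{t=1}^T \norm{\regret^t - \prediction^t}^2 .
\]
This already exhibits the two sanity checks mentioned in the text: with $\prediction^t=\vv{0}$ it collapses to the plain regret-matching bound $\sum_t\norm{\regret^t}^2$, and with a perfect prediction $\prediction^t=\regret^t$ it yields zero regret.

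It then remains to bound each prediction-error term by a constant independent of $T$. The instantaneous regret $\regret^t = \reward^t - \inner{\strategy^t}{\reward^t}\vv{1}$ has every coordinate in $[-\maxutildiff,\maxutildiff]$ because the admissible rewards span a range of width $\maxutildiff$; hence $\norm{\regret^t}\le \sqrt{|A|}\,\maxutildiff$. By hypothesis $\predictor_\theta$ outputs values in $[-\alpha,\alpha]^{|A|}$, so $\norm{\prediction^t}\le\sqrt{|A|}\,\alpha$. By the triangle inequality $\norm{\regret^t-\prediction^t}^2 \le |A|(\maxutildiff+\alpha)^2$, and summing over $t$ gives $\norm{[\cumregret^T]^+} \le \sqrt{|A|}\,(\maxutildiff+\alpha)\sqrt{T}=\bigO(\sqrt{T})$. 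Chaining this with the reduction bounds the external regret by $\bigO(\sqrt{T})$, which is sublinear, so PRM with $\predictor_\theta$ is a regret minimizer.

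The only genuinely delicate step is the predictive potential inequality, which requires careful use of the $[\cdot]^+$ projection's nonexpansiveness and the exact orthogonality $\inner{\strategy^t}{\regret^t}=0$; everything else is routine boundedness bookkeeping. Since that inequality is precisely the predictive Blackwell-approachability result of \citep{farina2021faster} and holds for \emph{any} prediction sequence, I expect the real content specific to this theorem to be the observation that a \emph{bounded} neural predictor keeps the per-step error $\norm{\regret^t-\prediction^t}$ uniformly bounded, so the $\sqrt{T}$ rate is preserved no matter what $\predictor_\theta$ learns to output.
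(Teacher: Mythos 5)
Your proof is correct and takes essentially the same route as the paper's: bound the per-step prediction error $\norm{\regret^t - \prediction^t}$ by a constant using the boundedness of the rewards and of $\predictor_\theta$'s outputs, then invoke the predictive Blackwell-approachability regret bound of \citep{farina2021faster} to conclude the external regret is $O(\sqrt{T})$. The differences are only presentational: you make explicit the reduction to $\norm{[\cumregret^T]^+}$ and use the squared form of the Farina et al.\ bound, whereas the paper cites their Theorem~3 directly.
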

\begin{proof}
    Since the reward $\reward$ for any action is bounded by the maximum utility difference $\maxutildiff$, the regret $\regret$ for any action is bounded by $2\maxutildiff$. Thus, for an arbitrary prediction $\prediction$ it holds
    \begin{equation*}
        \norm{\regret(\strategy, \reward) - \prediction} \le (2\maxutildiff+\alpha) |A|.
    \end{equation*}
    Using the PRM regret bound~\cite[Thm 3]{farina2021faster}, we obtain
    \begin{align*}
        R^{{\rm ext}, T} &\le
        \sqrt{2}\left(\sum_{t=1}^T\norm{\regret(\strategy^{t}, \reward^t) - \prediction^{t}}\right)^\frac{1}{2} \\
        &\le        
        \sqrt{2}\left((2\maxutildiff+\alpha) |A| T\right)^\frac{1}{2} 
        \in O\left(\sqrt{T}\right).
    \end{align*}
\end{proof}

As NPRM is regret minimizing regardless of the prediction $\prediction$, our network outputs $\prediction$ rather than strategy $\strategy$ as for NOA.
This allows us to achieve the best of both words -- adaptive learning algorithm with a small cumulative regret in our domain, while keeping the $O(T^{-1/2})$ worst case average regret guarantees. 
Note that $O(T^{-1/2})$ is the best achievable bound in terms of $T$ against a black-box~\cite{nisan2007algorithmic}.

\section{Experiments}
\label{sec:experiments}

\begin{figure*}[!t]
    \includegraphics[width=0.24\textwidth]{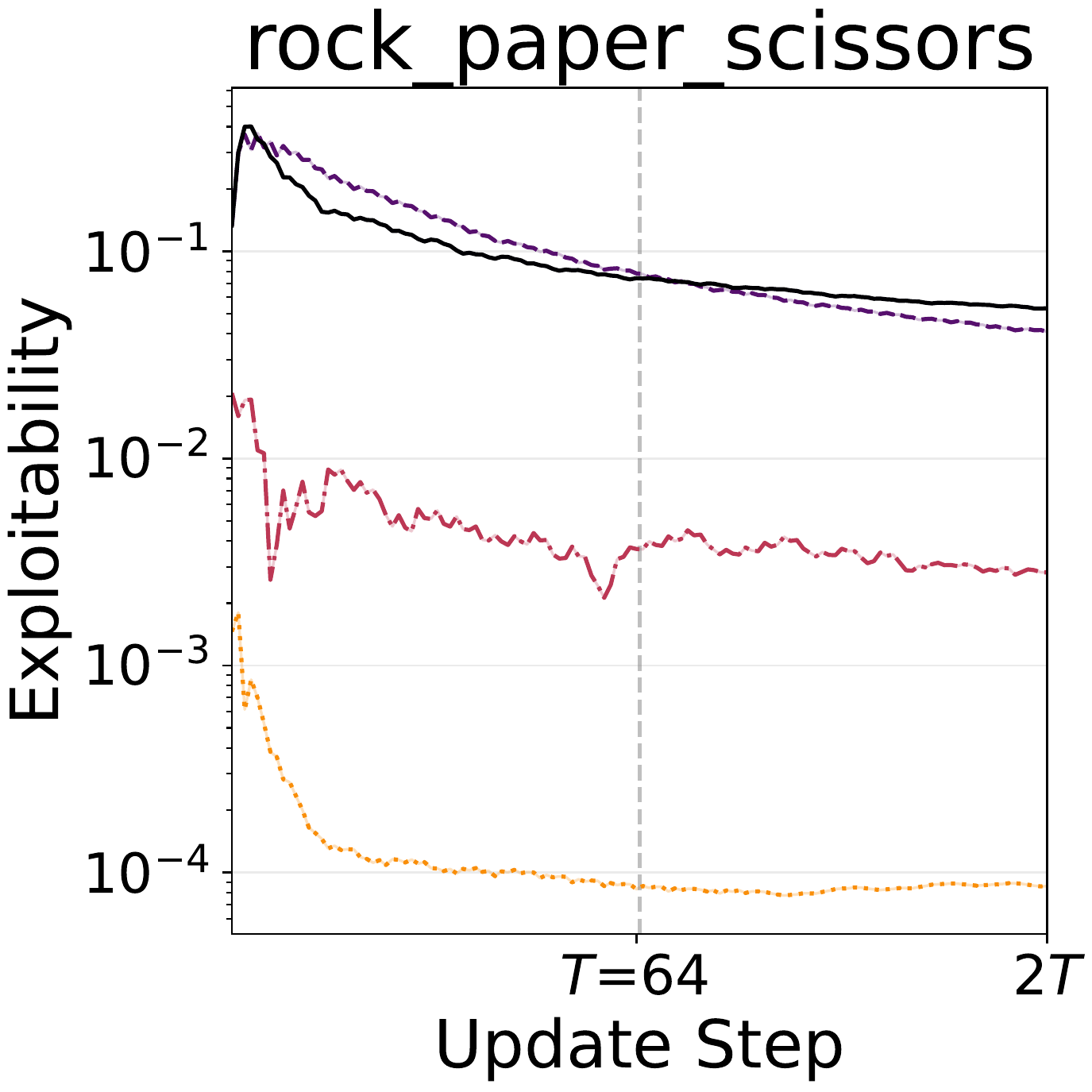}
    \includegraphics[width=0.24\textwidth]{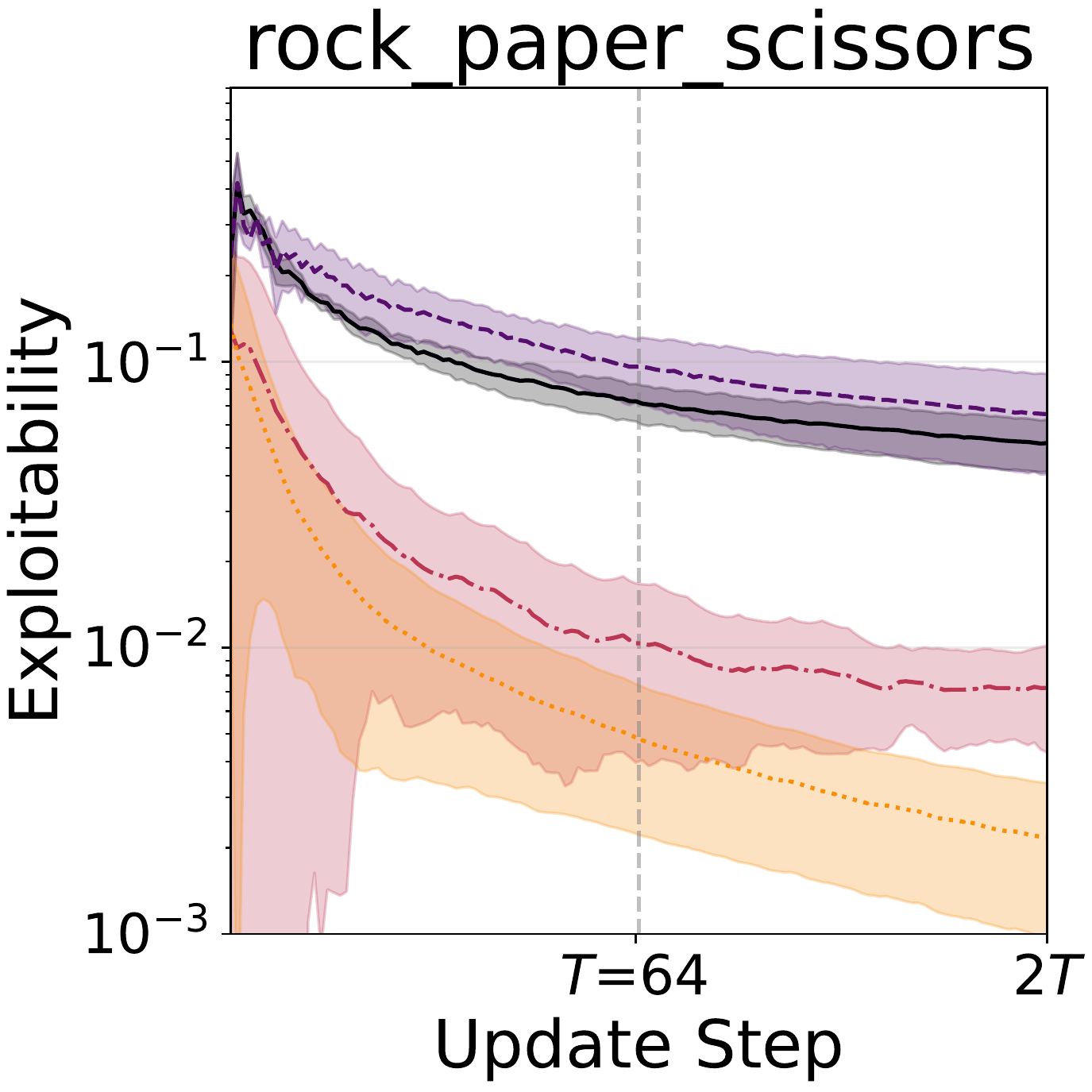}
    \includegraphics[width=0.24\textwidth]{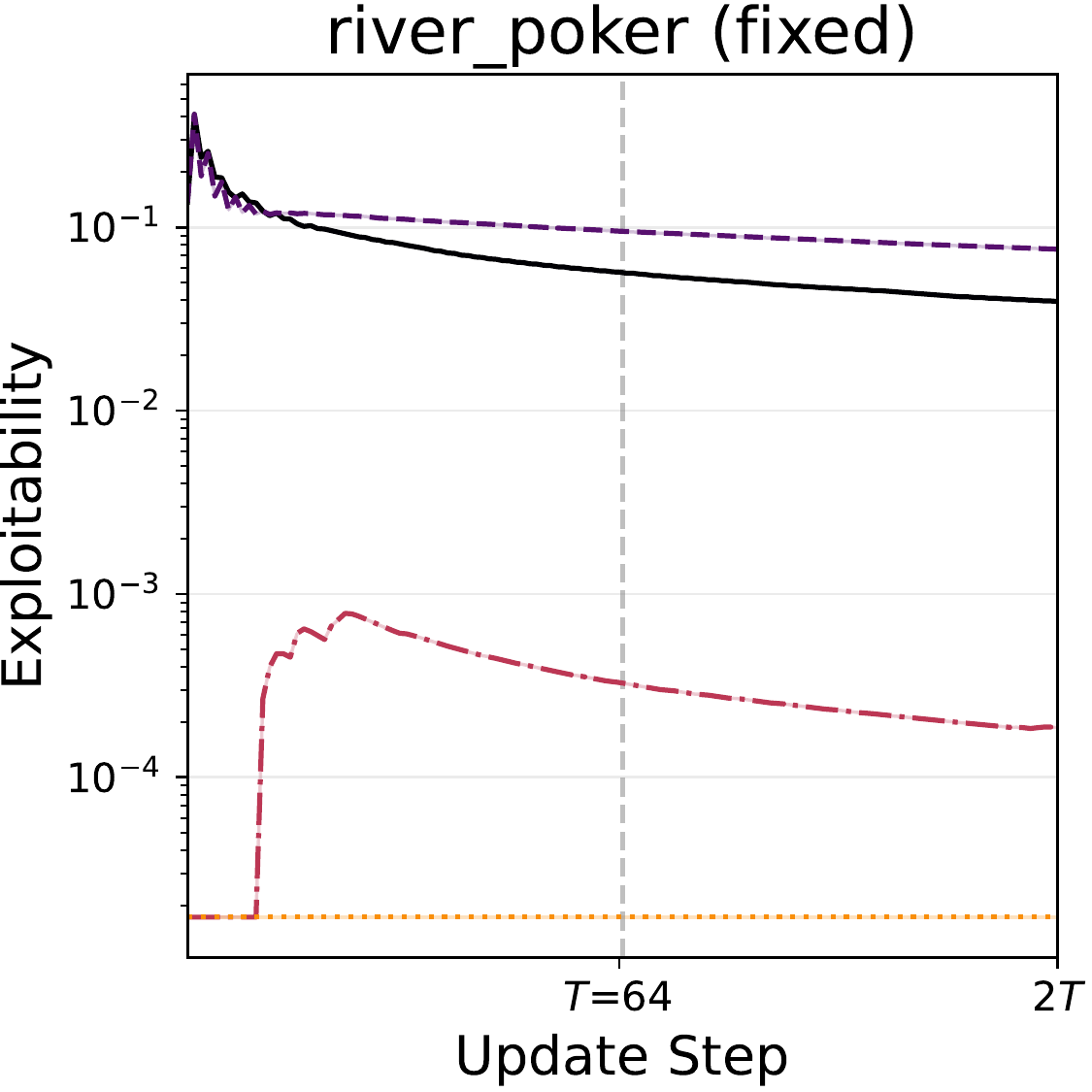}
    \includegraphics[width=0.24\textwidth]{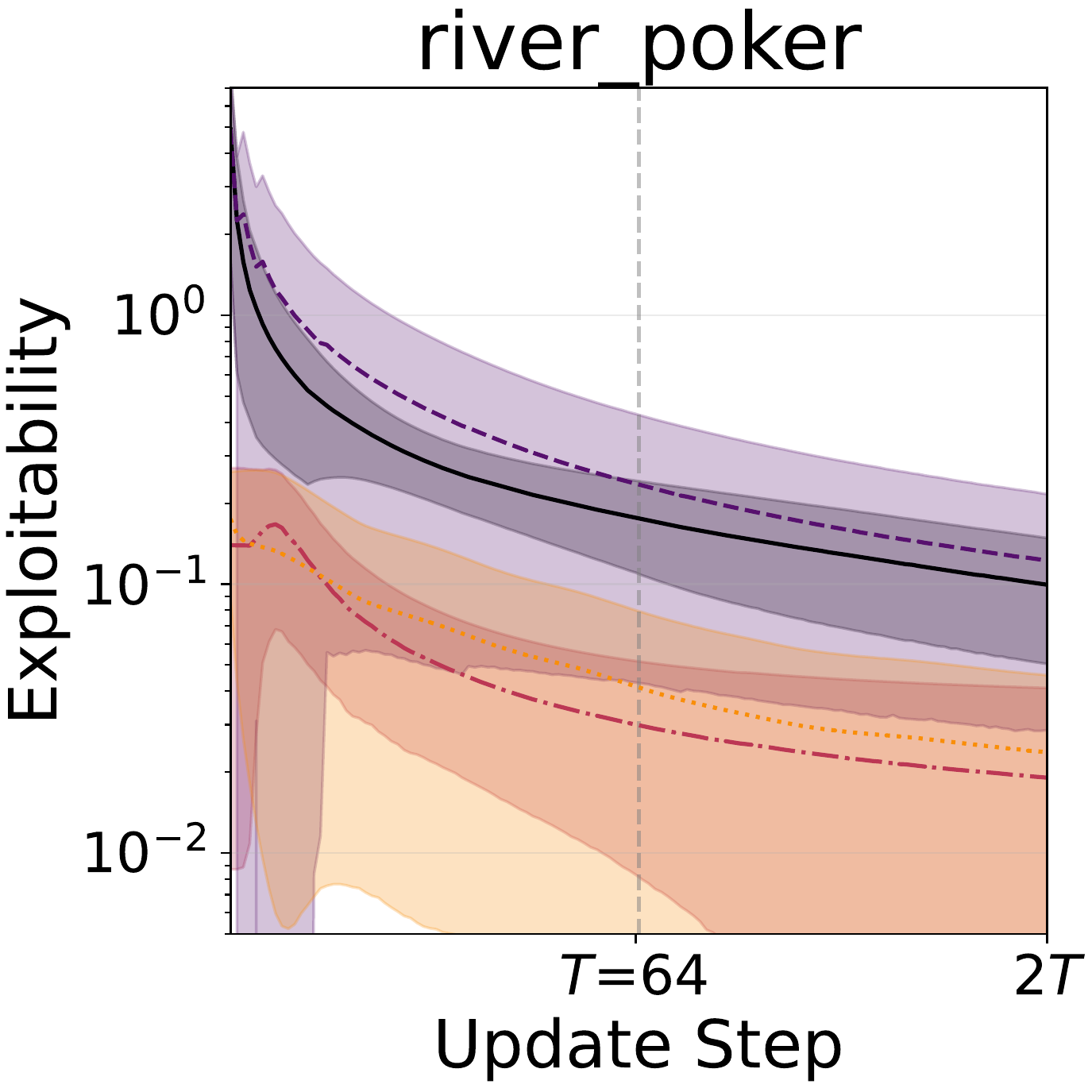}\\
    \centering
    \includegraphics[width=0.4\textwidth]{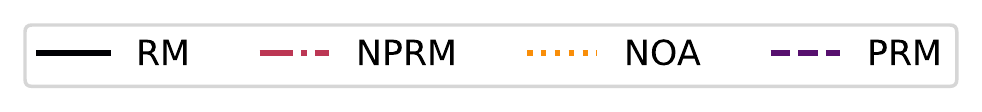}
\caption{Comparison of non-meta-learned algorithms (RM, PRM) with meta-learned algorithms (NOA, NPRM), on a small matrix game and a large sequential game and for a single fixed game versus a whole distribution over games. The figures show exploitability of the average strategy $\overline{\strategy}^t$. The y-axis uses a logarithmic scale. Vertical dashed lines separate two regimes: training (up to $T$ steps) and generalization (from $T$ to $2T$ steps). Colored areas show standard error for the sampled settings.
}
\label{fig: results}
\end{figure*}

We focus on application of regret minimization in games,
see Appendix~\ref{app: games} for their detailed description.
Specifically, we apply regret minimization to one-step lookahead search with (approximate) mini-max subgame value functions.
See also Section~\ref{sec: intro} for motivation of this approach.

For both NOA and NPRM, the neural network architecture we use is a two layer LSTM. For NOA, these two layers are followed by a fully-connected layer with the softmax activation. For NPRM, we additionally scale all outputs by $\alpha\ge2\maxutildiff$, ensuring any regret vector can be represented by the network. In addition to the last observed reward and the cumulative regret, the networks also receive contextual information corresponding to the player's observations.

We minimize objective~\eqref{eq: loss} for $T=64$ iterations over $512$ epochs using the Adam optimizer.\footnote{We use cosine learning rate decay from $10^{-3}$ to $3\cdot 10^{-4}$.} Other hyperparameters\footnote{Specifically, the size of the LSTM layer, the number of games in each batch gradient update, and the regret prediction bound $\alpha$.} were found via a grid search.
For evaluation, we compute exploitability of the strategies up to $2T=128$ iterations to see whether the algorithms can generalize outside of the horizon $T$ they were trained on and whether they keep reducing the exploitability.
We train and evaluate both NOA and NPRM and compare our methods against (P)RM. Our results are presented in Figure~\ref{fig: results}. 

The section is structured as follows.
First, we illustrate how our algorithms behave using a simple distribution of matrix games.
Next, we show how their performance extends to the sequential setting, where we evaluate on river poker.
To illustrate viability of our approach, we study the computational time reduction achieved by our algorithms.
Next, we demonstrate our algorithms are tailored to the training domain, and thus their performance can deteriorate out-of-distribution.
Finally, we discuss several possible modifications of our approach.

\subsection{Matrix Games}

\begin{figure*}[t!]
    \includegraphics[width=0.24\textwidth]{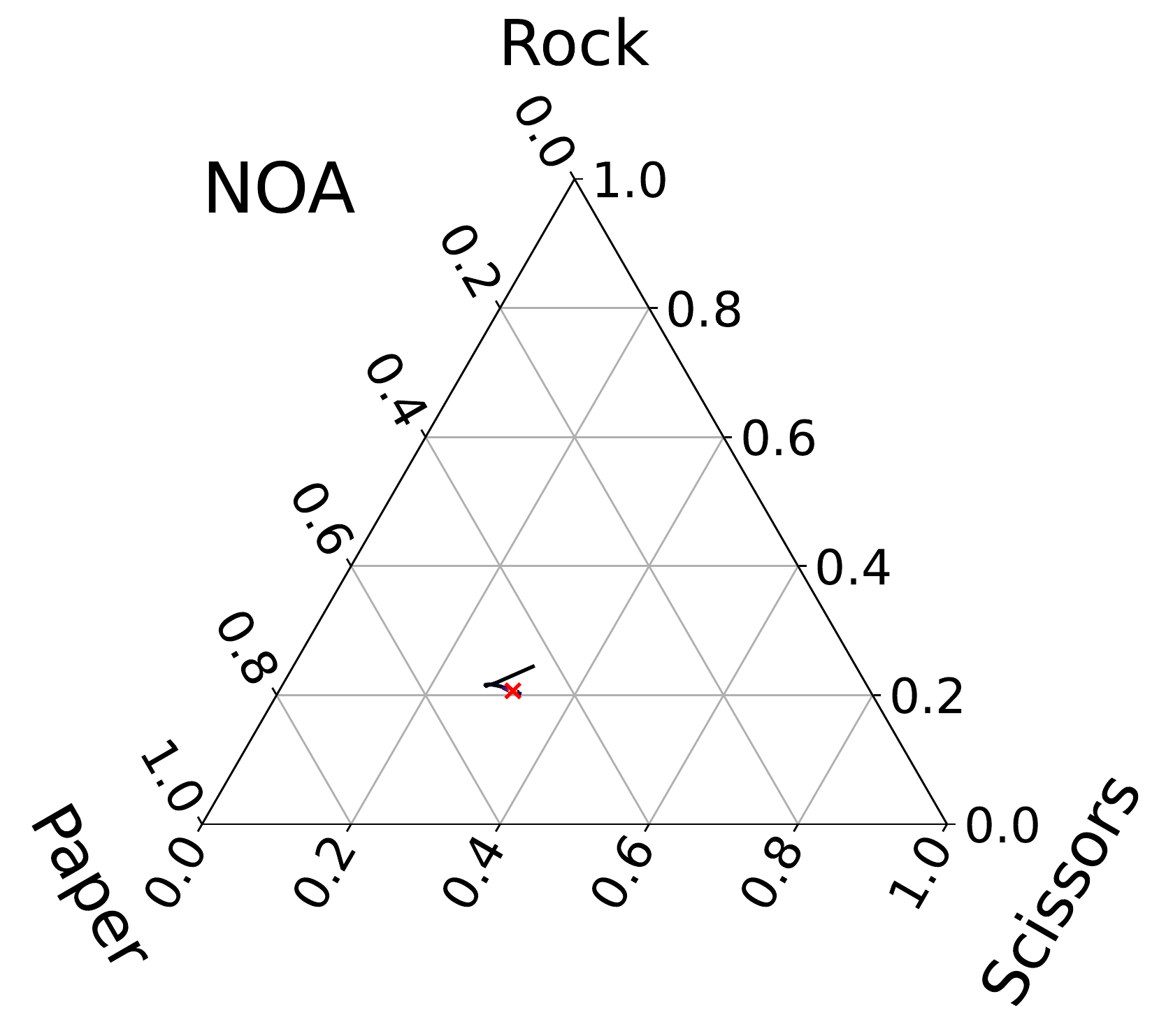}
    \includegraphics[width=0.24\textwidth]{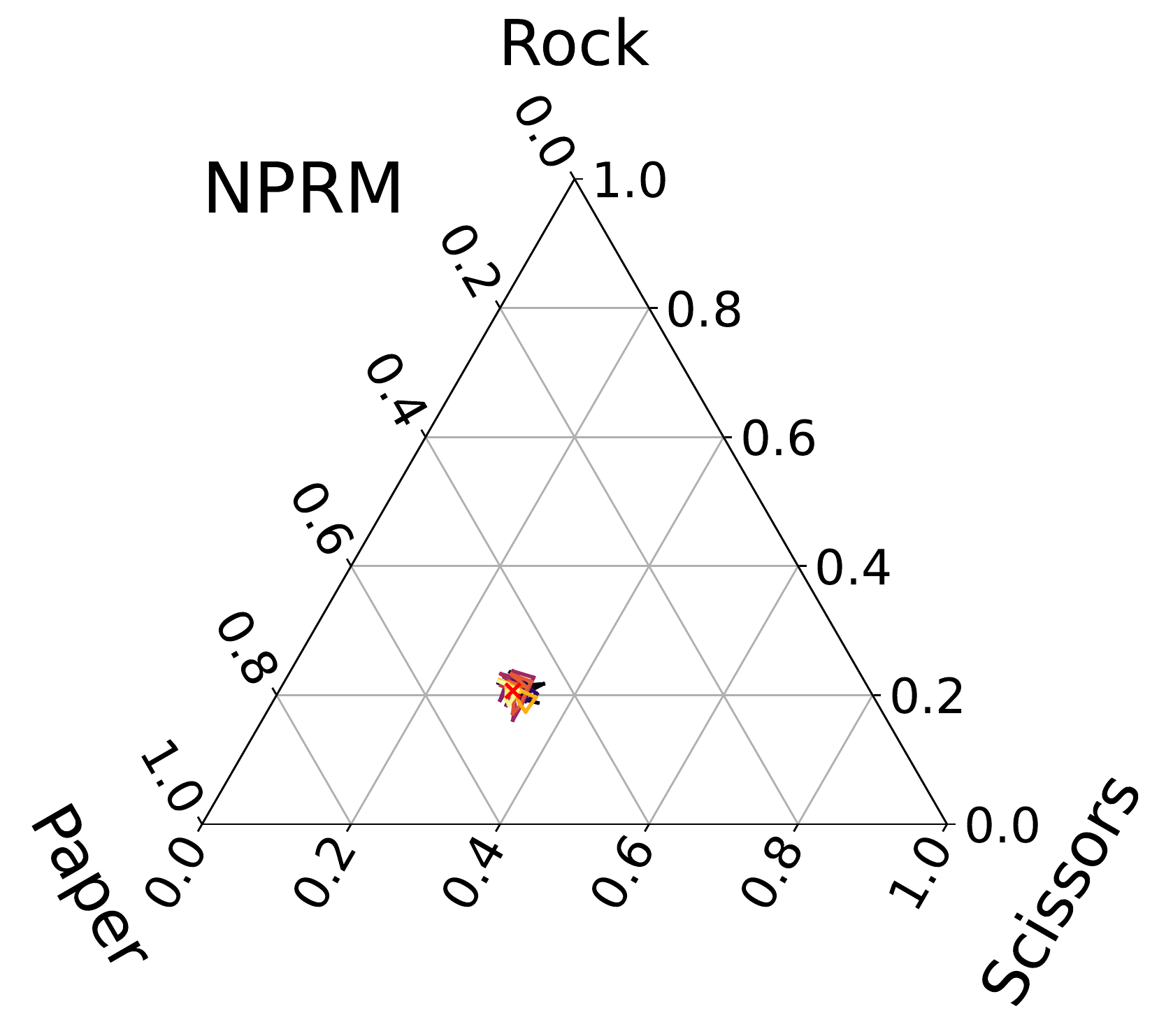}
    \includegraphics[width=0.24\textwidth]{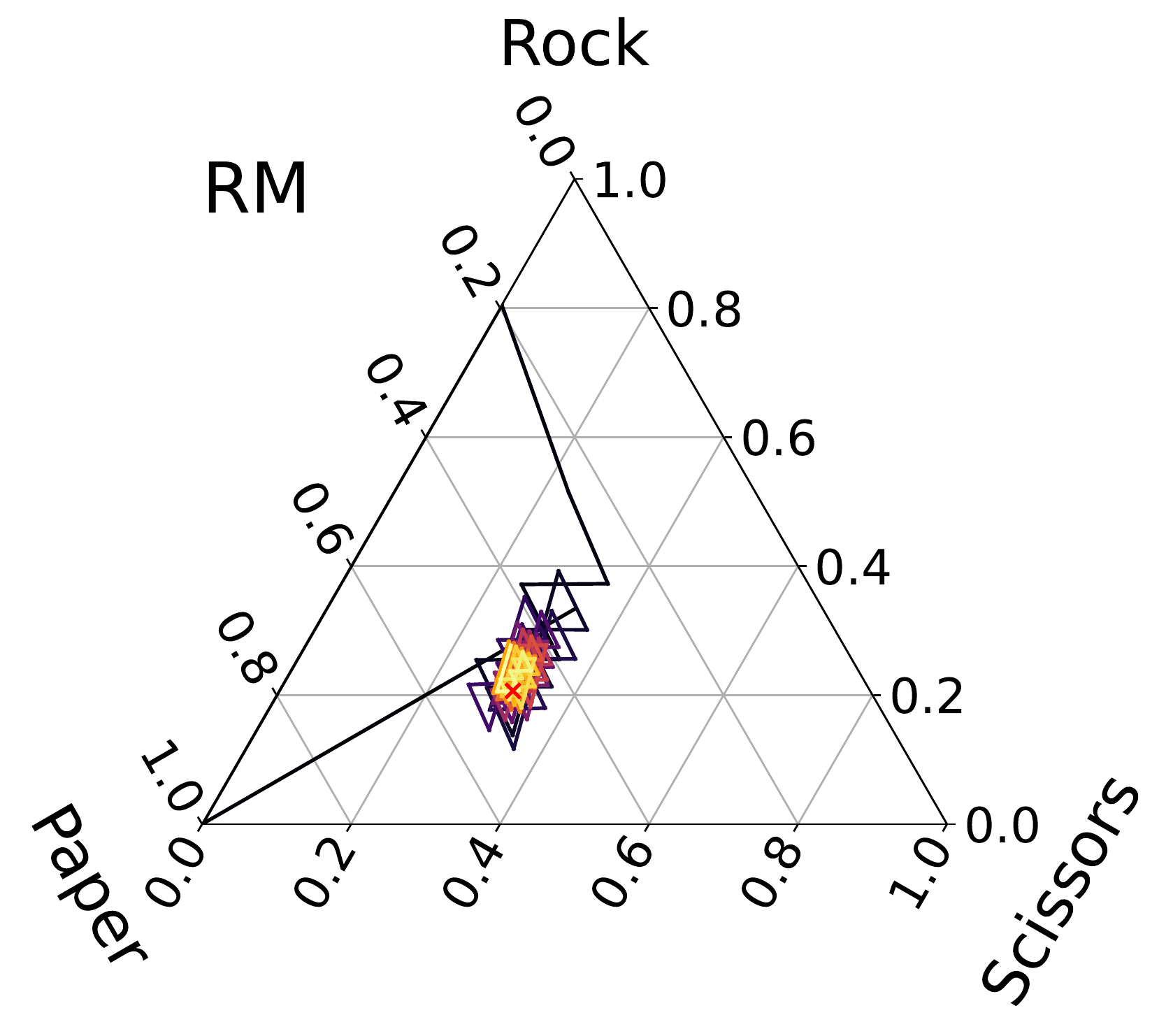}
    \includegraphics[width=0.24\textwidth]{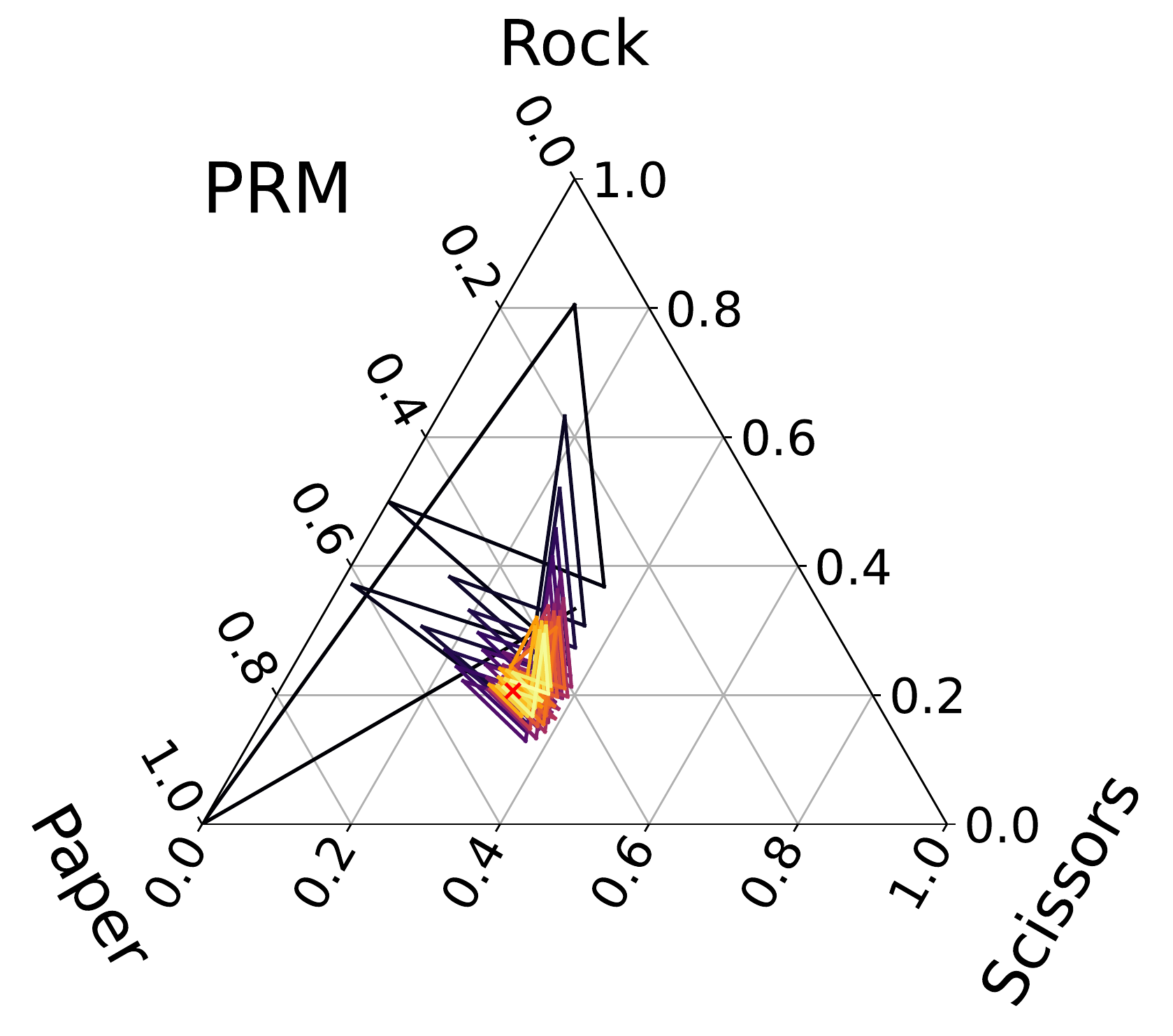}
    \\
    \includegraphics[width=0.24\textwidth]{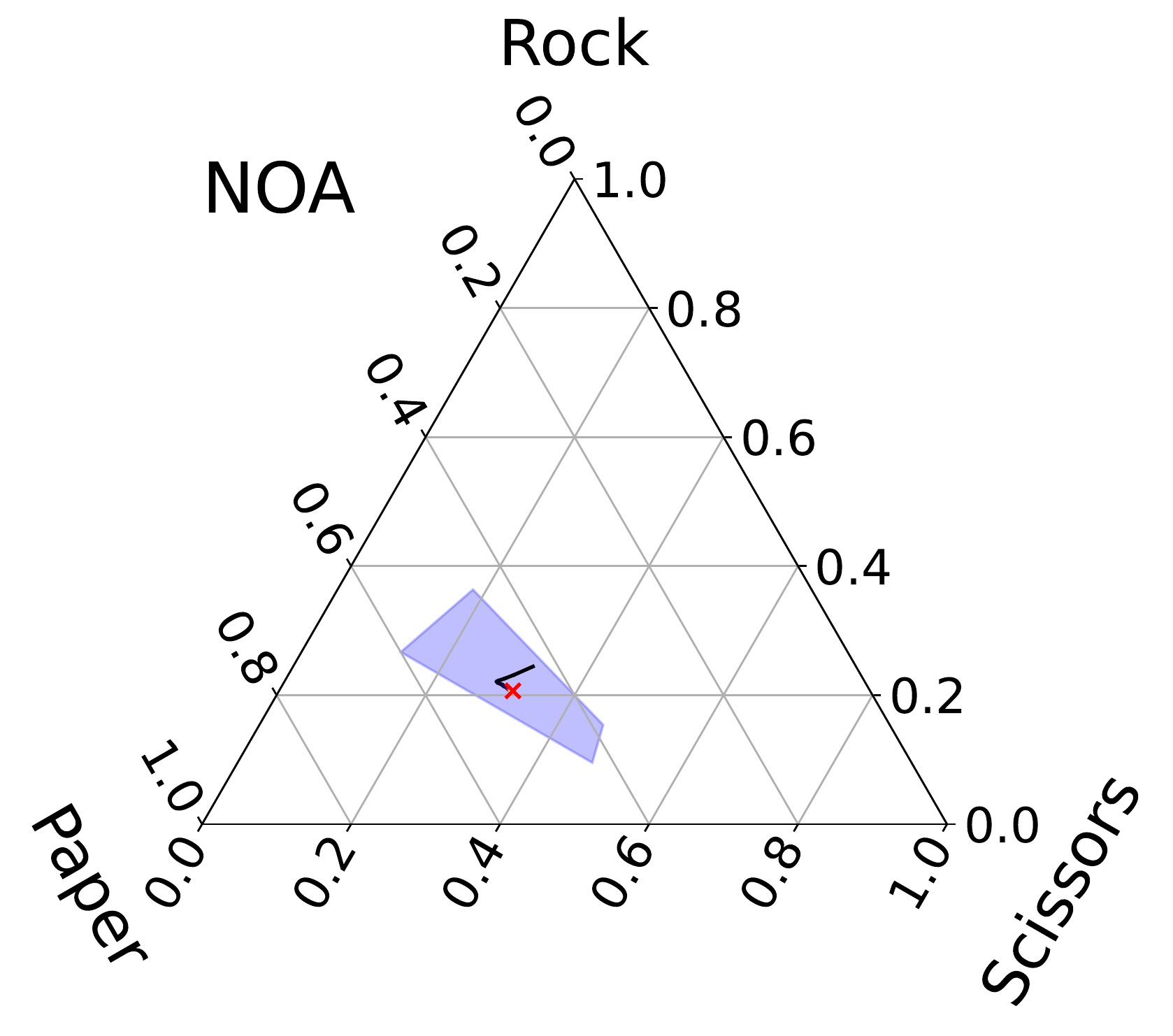}
    \includegraphics[width=0.24\textwidth]{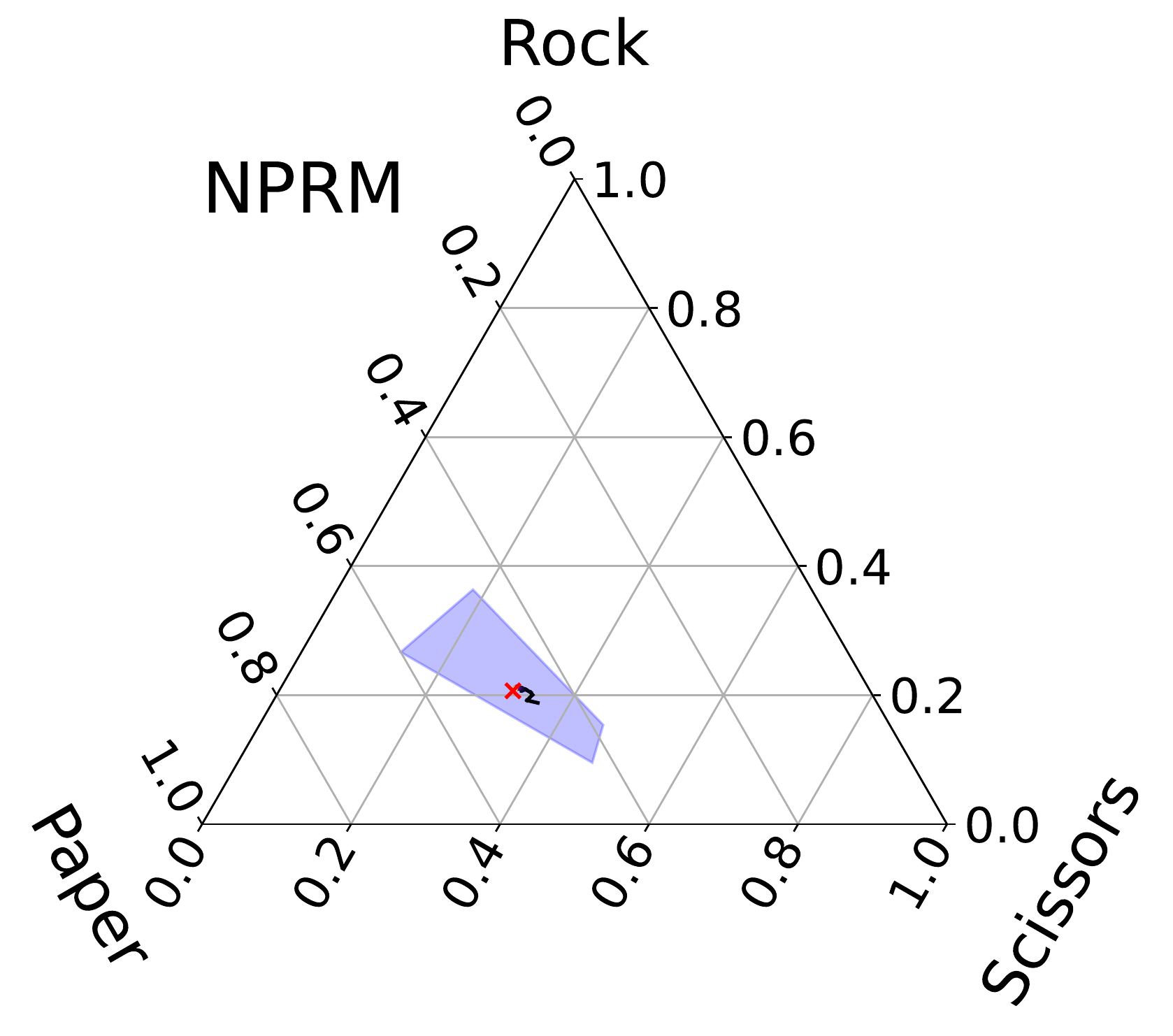}
    \includegraphics[width=0.24\textwidth]{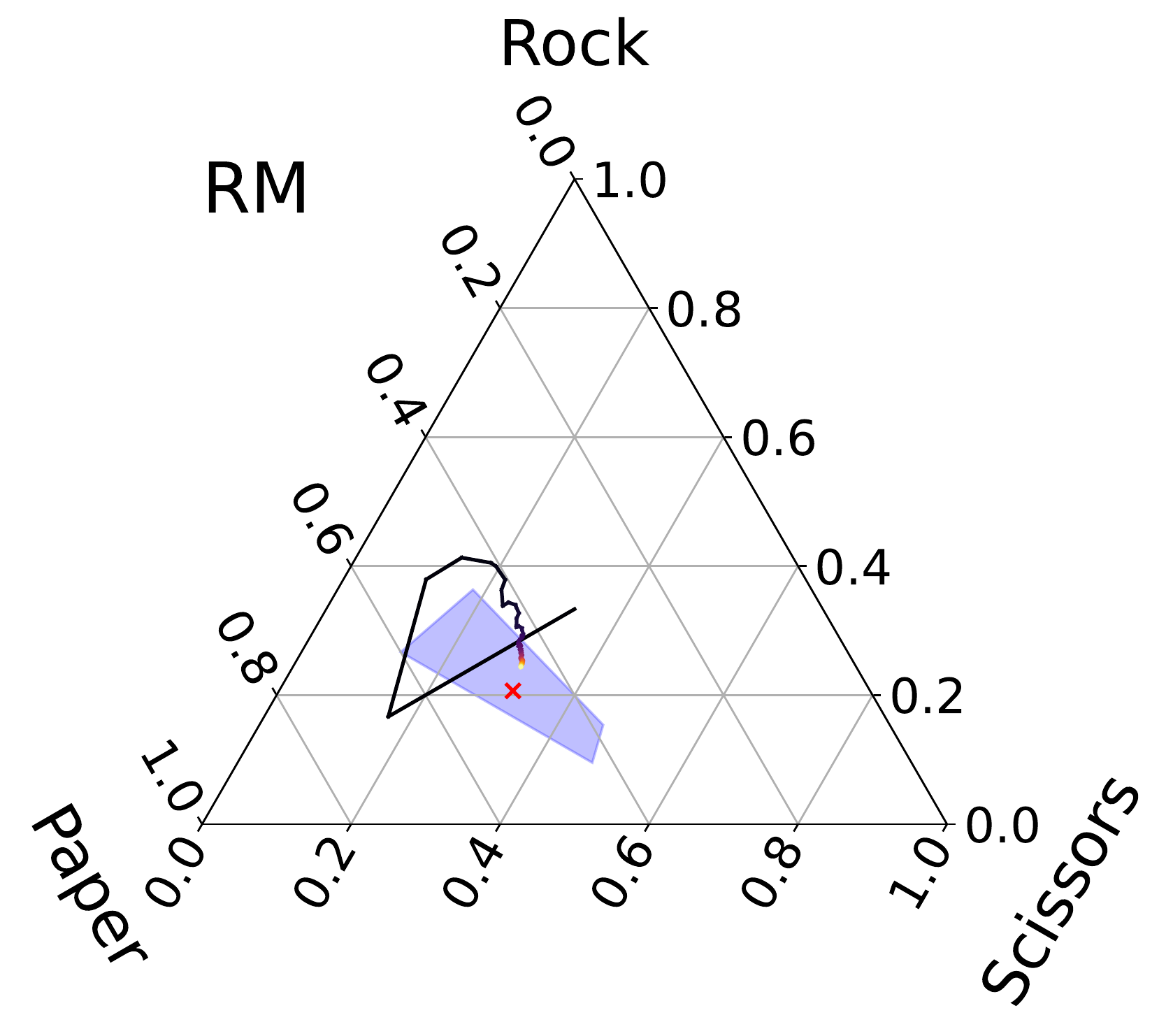}
    \includegraphics[width=0.24\textwidth]{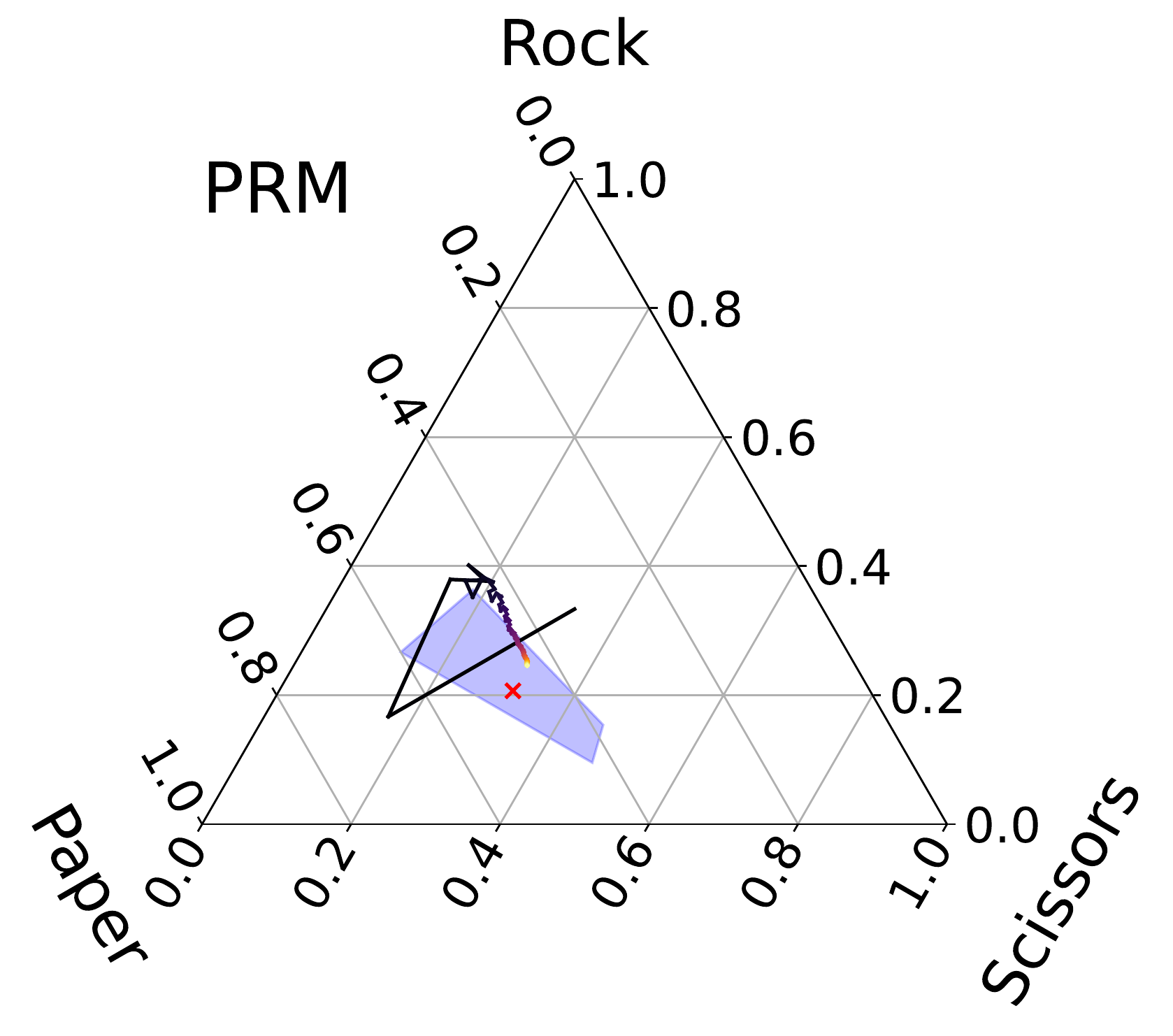}
\caption{For each algorithm, we show the trajectories of current strategies $\strategy^t$ (top row) and average strategies $\overline{\strategy}^t$ (bottom row) on {\tt rock\_paper\_scissors} (sampled) for $2T=128$ steps. 
The red cross shows the equilibrium of the sampled game. 
The trajectories start in dark colors and get brighter for later steps.
The blue polygon is the set of all equilibria in the distribution {\tt rock\_paper\_scissors}~(sampled), computed according to~\citep{BokHla2015a}. 
Notice how the strategies of our meta-learned algorithms begin in the polygon and refine their strategy to reach the current equilibrium. 
In contrast, (P)RM are initialized with the uniform strategy and visit a large portion of the policy space.
}
\label{fig: strategy evolution}
\end{figure*}

In the case of matrix games, a value function corresponds to playing against a best responding opponent.\footnote{A simultaneous-move matrix game can be formulated as a strategy-equivalent two step sequential game. The value function assumes optimal play by the opponent, i.e. a best response.}
We use a modification of the standard {\tt rock\_paper\_scissors} game and perturb two elements of the utility matrix to generate a distribution $\tasks$, see Appendix~\ref{app: matrix games}.

Our results are presented in Figure~\ref{fig: results}. 
First, we consider the distribution to have probability 1 for a single game, i.e. the game is fixed. In this setting, our algorithms can simply overfit and output a strategy close to a Nash equilibrium. 
Their convergence is very fast compared to (P)RM. 
Notice that NOA outperforms NPRM in this setting. 
We hypothesize there are two main reasons for this difference. First, NPRM is more restricted in its functional dependence. Second, the gradient of NPRM vanishes, resp. explodes when the cumulative regret is large, resp. small, making overfitting more challenging. 

Next, we sample games in the perturbed setting. 
Our methods keep outperforming (P)RM -- even after the horizon $T$ on which they were trained.
To further illustrate the differences between the meta-learned algorithms and (P)RM, we plot the current and average strategies selected by each algorithm in~Figure~\ref{fig: strategy evolution}. Both NOA and NPRM are initially close to the equilibrium and converge relatively smoothly. 
In contrast, (P)RM visit large portion of the policy space even in later steps, making the convergence slower.

Notice that RM exhibits similar performance to PRM both in matrix and sequential games.
This may seem surprising, as PRM was shown to be stronger than RM in self-play settings~\citep{farina2021faster}. 
However, the reason is that we minimize regret against an adversary rather than the self-play opponent.
PRM performs well when the last-observed reward is a good prediction of the next one. 
This is true in self-play, as the opponent does not radically change their strategy between iterations.
However, it is no longer the case when the values are coming from a value function where arbitrarily small modification of the input can lead to large changes of the output~\citep{schmid2021search}.

\begin{table}[t]
\centering
{\renewcommand{\arraystretch}{1.3}% for the vertical padding
\begin{tabular}{|c||c|c|c|c||}
\hline
Target   &  $ 4\cdot 10^{-1}$  &   $10^{-1}$   &   $6\cdot 10^{-2}$  &    $2\cdot 10^{-2}$     \\
\hline
\hline
RM   &   20  &   128   &  212  &   615      \\
\hline
PRM   &  36   &  158   &  261  &   793      \\
\hline
NOA   &  \textbf{1}   &   18    &  41   &   157      \\
\hline
NPRM   &   \textbf{1}  &   \textbf{16}   & \textbf{26}   &   \textbf{118}     \\
\hline
\end{tabular}
}
\caption{Number of steps each algorithm requires to reach target exploitability on {\tt river\_poker} (sampled) in expectation.}
\label{tab: target explo}
\end{table}

\begin{figure}[t!]
\centering
    \includegraphics[width=0.23\textwidth]{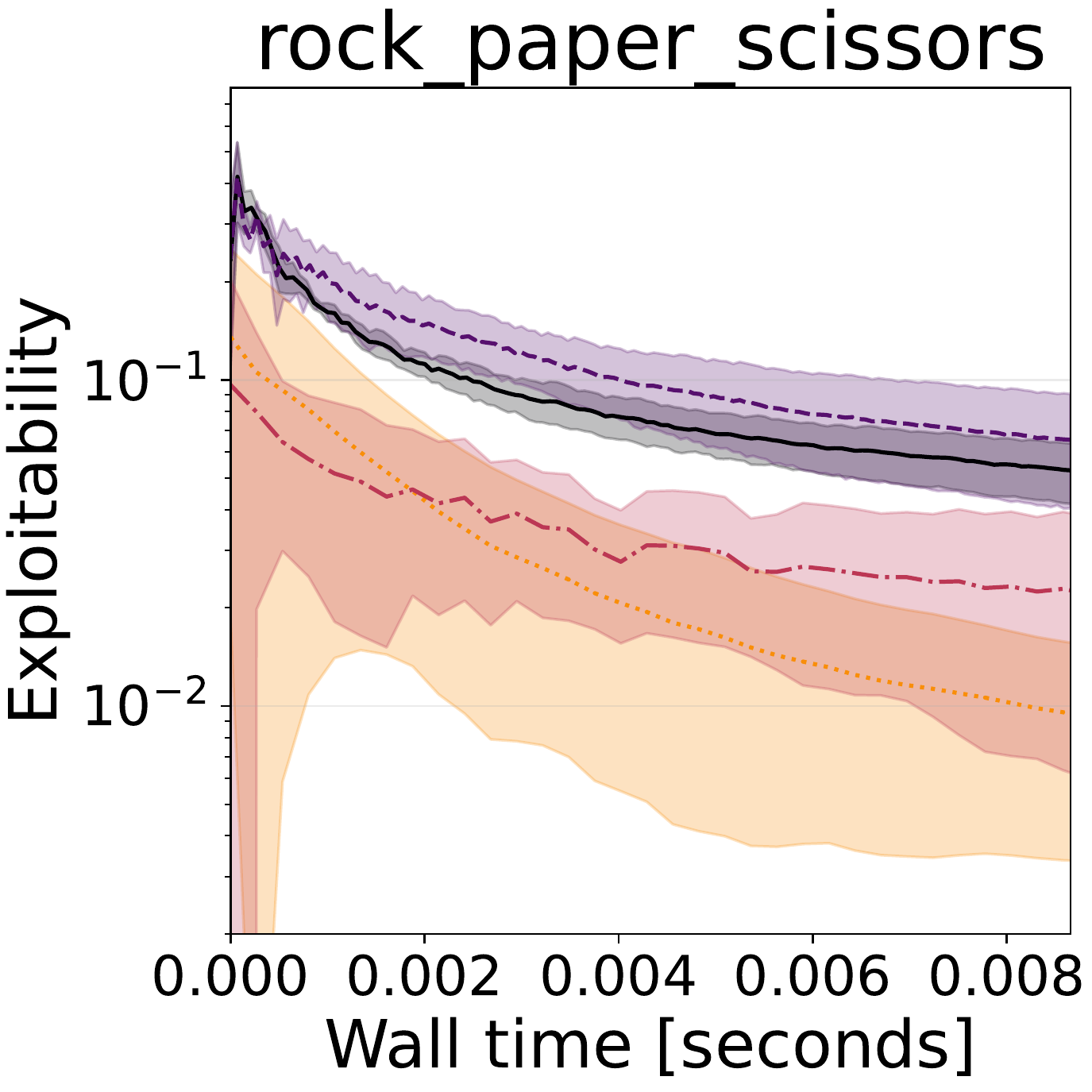}
    \includegraphics[width=0.23\textwidth]{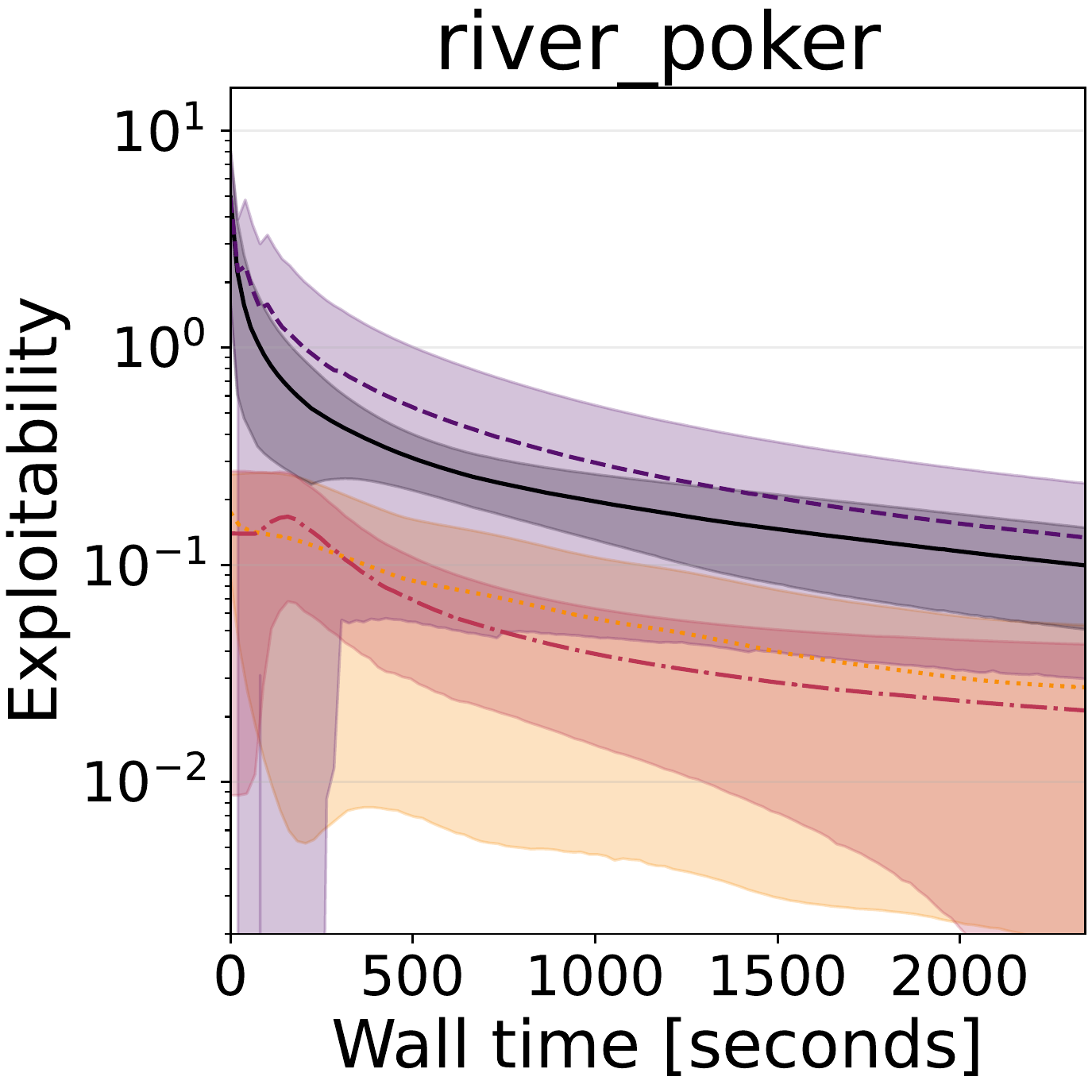}\\
    \centering
    \includegraphics[width=0.4\textwidth]{images/legend.pdf}
\caption{Comparison of regret minimization algorithms as a function of wall time, rather than number of steps shown in Figure~\ref{fig: results}.
}
\label{fig: roi walltime}
\end{figure}

\subsection{Sequential Games}
To evaluate our methods for sequential games, we use the public root state of {\tt river\_poker} -- a subgame of no-limit Texas Hold'em Poker with approximately $62$ thousand states. The distribution $\tasks$ is generated by sampling public cards
uniformly, while the player beliefs are sampled in the same way as in \citep{DeepStack}.
For the value function, we used 1,000 iterations of CFR$^+$.
See Appendix~\ref{app: sequential games} for more details. 

Our setup allows NOA and NPRM to learn to minimize regret in a contextualized manner, specific to each decision state. 
This is achieved by augmenting the input of the network by features corresponding to the player's observation at each state. 
In this case, the input is the the beliefs for both players and an encoding of private and public cards.

Our results are presented in Figure~\ref{fig: results}. 
We show that both NOA and NPRM are able to approximate an equilibrium of a fixed game very closely, often to higher precision than the solver. 
This manifests seemingly as a lower bound on exploitability for {\tt river\_poker} (fixed), see Appendix~\ref{app: approx value function} for details.
Importantly, even in the sampled setting, our algorithms greatly outperform (P)RM, reducing the exploitability roughly ten-times faster.
Just like in the matrix setting, PRM shows similar performance to RM, see previous section.

To further evaluate the improvements, we tracked how many steps it takes to reach a solution of specified target quality, see Table~\ref{tab: target explo}.
Both NOA and NPRM outperform (P)RM for all target exploitabilities, with better solutions requiring an order of magnitude less steps.

\subsection{Computational Time Reduction}

The reduction of the number of interactions with the environment may come at the expense of increasing the computational time, due to the overhead associated with calling the neural network. 
This time also depends on other factors, such as the selected domain, the available hardware, or the size of the network.
To assess the computational savings, we plot our results as a function of wall time in Figure~\ref{fig: roi walltime}.

On {\tt rock\_paper\_scissors}, the network overhead is noticeable, making each step of our methods about $4\times$ slower than (P)RM. 
Despite this, our methods keep outperforming (P)RM even after accounting for this extra cost. 
The offline meta-training was performed in about ten minutes. 

On {\tt river\_poker}, interacting with the environment is very expensive. 
Each interaction requires approximating optimal strategy\footnote{We wrote a costume solver for {\tt river\_poker} which outperforms other publicly available solvers. We made the solver available on https://github.com/DavidSych/RivPy.} in the subgame i.e. 1,000 iterations of CFR$^+$. 
Here, we observed the reduction in the number of steps translates well to the reduction of computational time.
For example, exploitability reached by NPRM after one minute would take RM, resp. PRM approximately 26, resp. 34 minutes to reach.
The meta-training on {\tt river\_poker} took two days.

We ran these experiments using a single CPU thread. 
As neural networks greatly benefit from using parallel processing, in some sense this can be seen as the worst-case hardware choice. 
Furthermore, for larger games than the ones considered here, each interaction is typically even more expensive.

\subsection{Out of Distribution Convergence}
As stated before, NOA is not guaranteed to minimize regret. However, NPRM is a regret minimizer even for games $\task'\sim\tasks' \neq \tasks$. Figure \ref{fig: out of distribution convergence} shows both methods trained to minimize regret on {\tt rock\_paper\_scissors} (sampled) and evaluated on {\tt uniform\_matrix\_game} (sampled). The results show that the performance of NOA deteriorates significantly. 
This is expected, as it aligns with the no-free-lunch theorems for optimization.
However, NPRM is able to keep minimizing regret even outside the domain it was trained on. In this case, it even outperforms (P)RM.

\begin{figure}[t]
    \includegraphics[width=0.23\textwidth]{images/matrix/rock_rock_394_152.pdf}
    \includegraphics[width=0.23\textwidth]{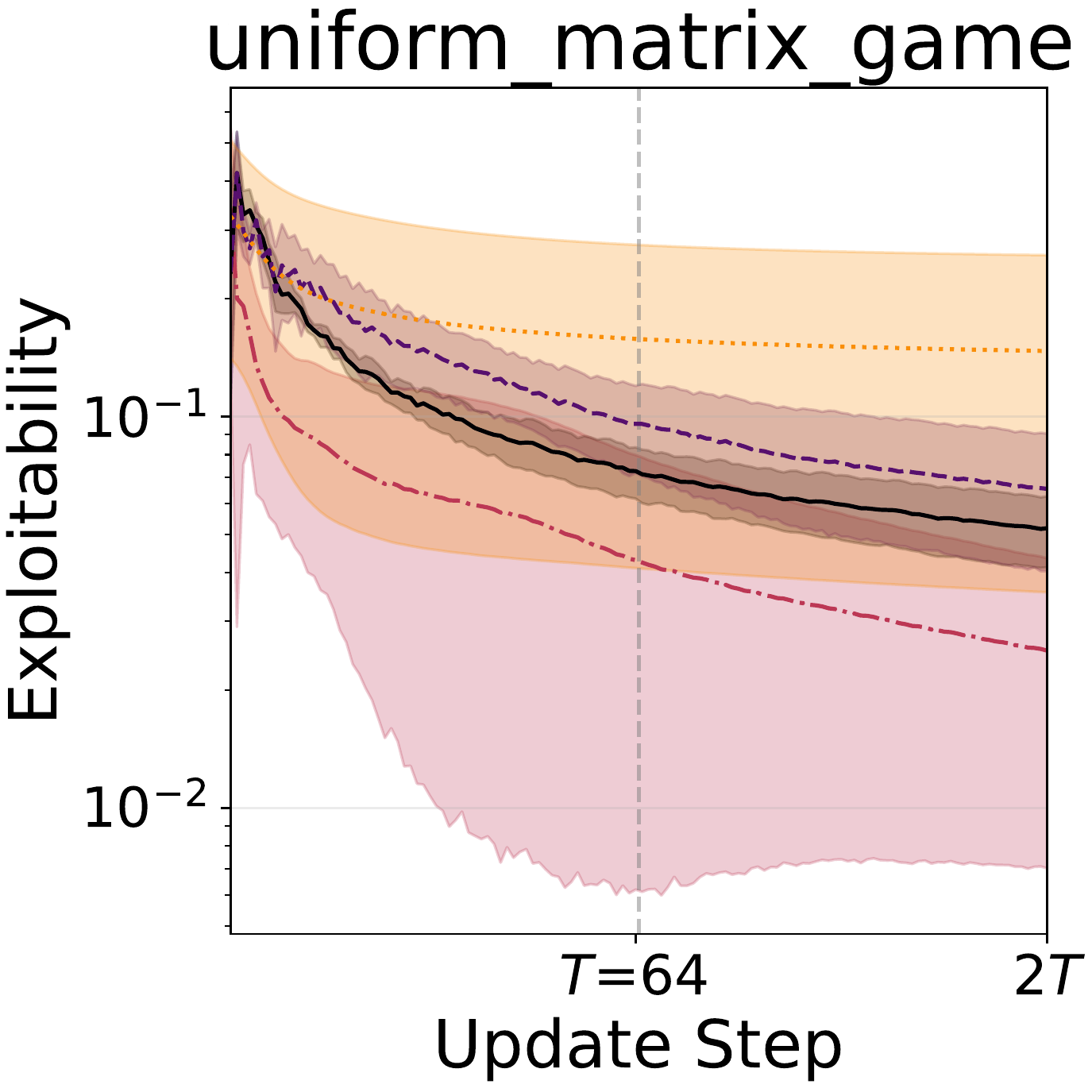}\\
    \centering
    \includegraphics[width=0.4\textwidth]{images/legend.pdf}
\caption{Comparison of the converge guarantees of NOA and NRPM. Both were trained on {\tt rock\_paper\_scissors}~(sampled).~Left~figure~shows NOA and NPRM can out outperform (P)RM on the distribution it was trained on. However, right figure shows that when evaluated on {\tt uniform\_matrix\_game} (sampled), the performance of NOA deteriorates significantly.}
\label{fig: out of distribution convergence}
\end{figure}

\subsection{Additional Experiments}
While the previous experiments made use of the common regret-matching-like setup, our meta-learning approach is more general. We investigated two modification based on previous methods. First, instead of aggregating the instantaneous regrets directly, we summed only the positive parts of said regrets, similar to (P)RM$^+$~\citep{tammelin2014solving}. Second, we used Hedge~\citep{freund1997decision} instead of regret matching to produce the strategy $\strategy$. We present results for both of these approaches in Appendix~\ref{app: extra experiments}, and Figure~\ref{fig: additional results}. 
Both meta-learned algorithms keep outperforming corresponding equivalents of (P)RM.

\section{Conclusion}

We introduced two new meta-learning algorithms for regret minimization in a new \enquote{learning not to regret} framework.
Our algorithms are meta-learned to minimize regret fast against a distribution of potentially adversary environments.
We evaluated our methods in games, where we minimize regret against an (approximate) value function and measure the exploitability of the resulting strategy.
Our experiments show that our meta-learned algorithms attain low exploitability approximately an order of magnitude faster than prior regret minimization algorithms.

In the future, we plan to extend our results to the self-play settings. 
We also plan to apply our methods with hindsight rationality \citep{morrill2021hindsight} for games which change over time.
This is also an opportunity to combine our offline meta-learning with the online meta-learning of~\citep{harris2022meta}.

% References, meta
\paragraph{Acknowledgements} 
The authors would like to thank Martin Loebl, Matej Morav\v{c}\'i{k}, Viliam Lis\'{y}, and Milan Hlad\'{i}k for their insightful comments.
This work was supported by the Czech Science Foundation grant no. GA22-26655S and CoSP Project grant no. 823748.
Computational resources were supplied by the project "e-Infrastruktura CZ" (e-INFRA LM2018140) provided within the program Projects of Large Research, Development and Innovations Infrastructures.

\clearpage

\bibliography{main}

\newpage
\clearpage
\appendix

\section{Games}
\label{app: games}
\subsection{Matrix Games}
\label{app: matrix games}
The {\tt rock\_paper\_scissors} is a matrix game given by
\begin{equation*}
    u_1 =
    -u_2 =
    \begin{pmatrix}
0 & -1 & 3+X \\
1 & Y & -1 \\
-1 & 1 & 0 
\end{pmatrix} ,
\end{equation*}
where the parameters $X,Y$ are set to zero for the {\tt rock\_paper\_scissors}~(fixed), and $X,Y\sim\mc U(-1, 1)$ for {\tt rock\_paper\_scissors}~(sampled). Note that the fixed variant is a biased version of the original game. We opted for this option to make the equilibrium strategy non-uniform, as in the original game (P)RM are initialized with the equilibrium policy.

The {\tt uniform\_matrix\_game} (sampled) is a 3$\times$3 matrix game with elements generated i.i.d. from $\mc U(-1, 1)$.

\subsection{Sequential Game}
\label{app: sequential games}

For {\tt river\_poker}, we use the endgame of no-limit Texas Hold'em Poker with all public cards revealed. The currency used is normalized such that the initial pot of each player is one. The total budget of each player is set to one-hundred times that amount, which implies there are 61,617 information states in total. 
To create a distribution, we sample the five public cards, and the beliefs for both players in the root of the subgame. The public cards are sampled uniformly, while the beliefs are sampled in the same way as \citep{DeepStack}. The algorithms presented in the main text are used only in the public root state, and the optimal strategy in the rest of the game is approximated via 1,000 iterations of self-play CFR$^+$. The corresponding approximate counterfactual values were used as rewards $\reward^t$. The exploitability of the optimal extension was obtained by approximating the game value (again via CFR$^+$), and subtracting the value in the root given the average strategy of each algorithm. We opted to use CFR$^+$ due to its strong empirical performance on poker games \citep{bowling2015heads}.
Compared to value functions represented for example by a neural network, this approach offers strong guarantees and replicability.

\section{Approximate Value Function Error}
\label{app: approx value function}
To approximate the exploitability, one needs to approximate both the value function, and the game value of the {\tt river\_poker}.
As stated above, we used CFR$^+$ in self-play to approximate both.
During evaluation of {\tt river\_poker} (fixed), we used 10$\times$ more CFR$^+$ iterations than in training, to improve the approximation of both game value and the value function.
In this case, when used to approximate the game value, it yields a solution with a two-player Nash gap of approximately $1.6 \cdot 10^{-5}$.
The error is of similar magnitude as the exploitability observed in Figure~\ref{fig: results} for {\tt river\_poker} (fixed) and it explains the apparent lower bound.

During evaluation, we observed rapid changes of performance of NPRM when we changed the number of CFR$^+$ iterations.
Since it was trained using 1,000 iterations, the evaluation is effectively out-of-distribution. 
We hypothesise that the reason why NPRM struggles more than NOA is that PRM is very sensitive to small changes in the prediction when the cumulative regret is small.
This makes not only the training challenging, but may also explain the sudden degradation in performance observed on {\tt river\_poker} (fixed).

\section{Alternative Setups}\label{app: extra experiments}

\begin{figure}[t!]
    \includegraphics[width=0.23\textwidth]{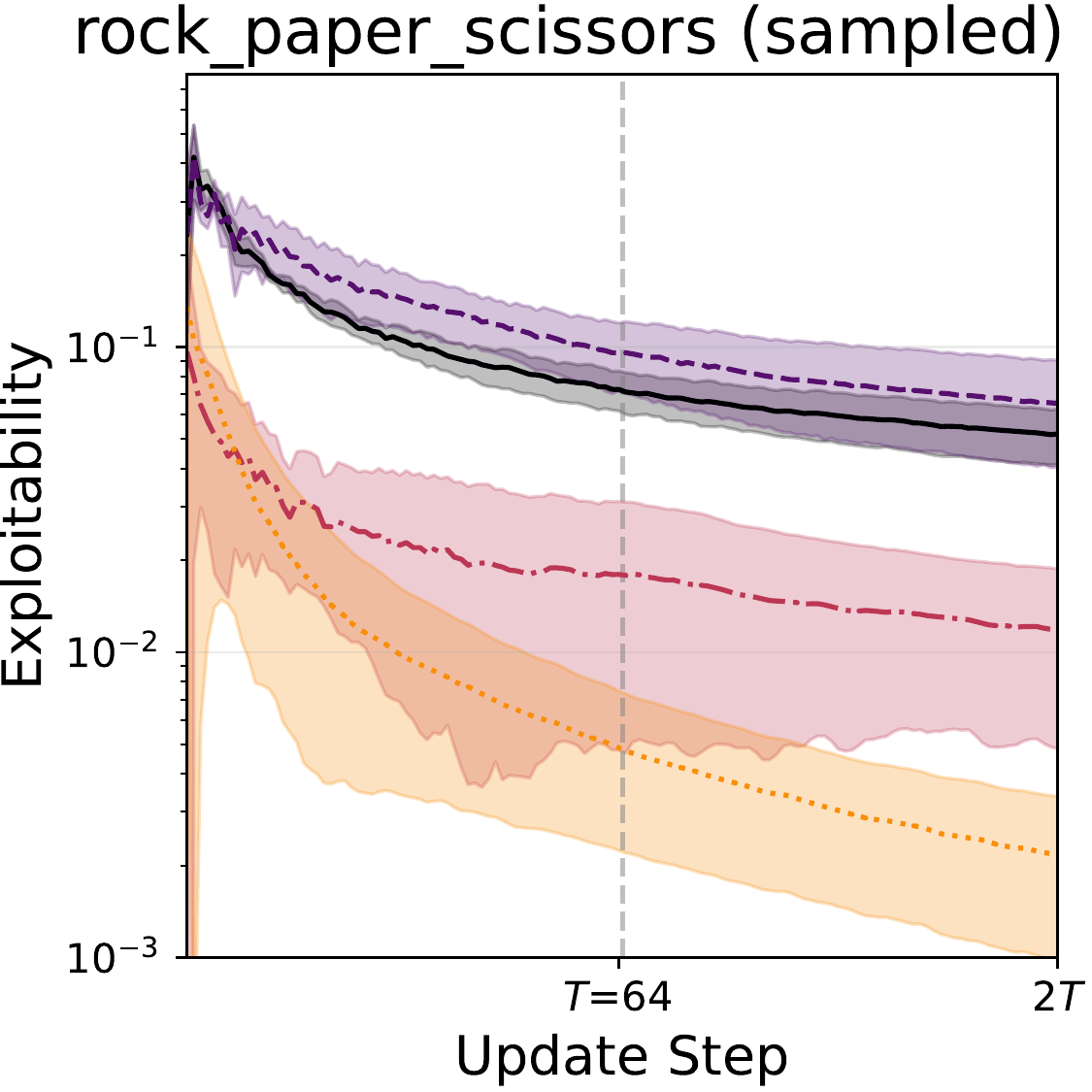}
    \includegraphics[width=0.23\textwidth]{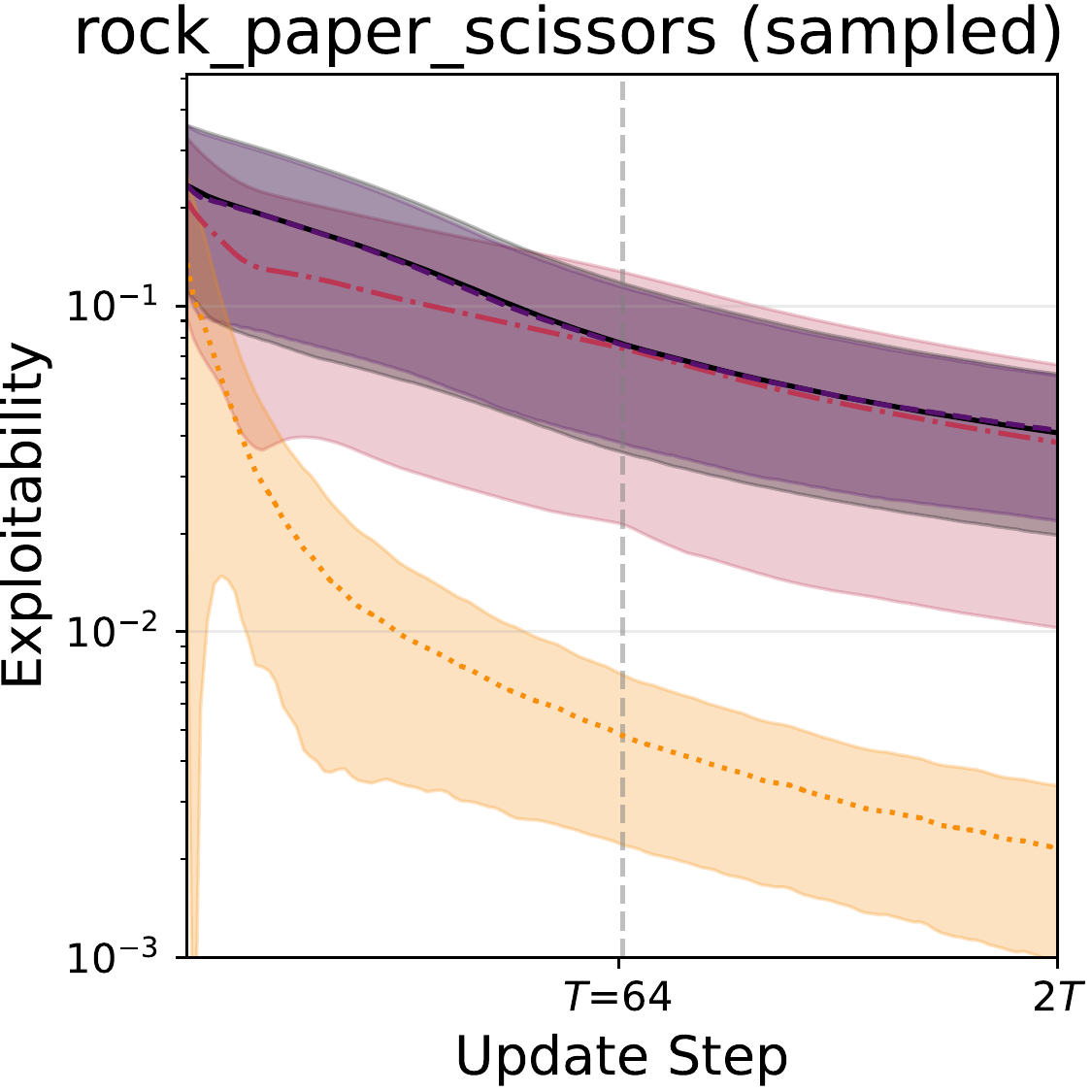}\\
    \\
    \includegraphics[width=0.23\textwidth]{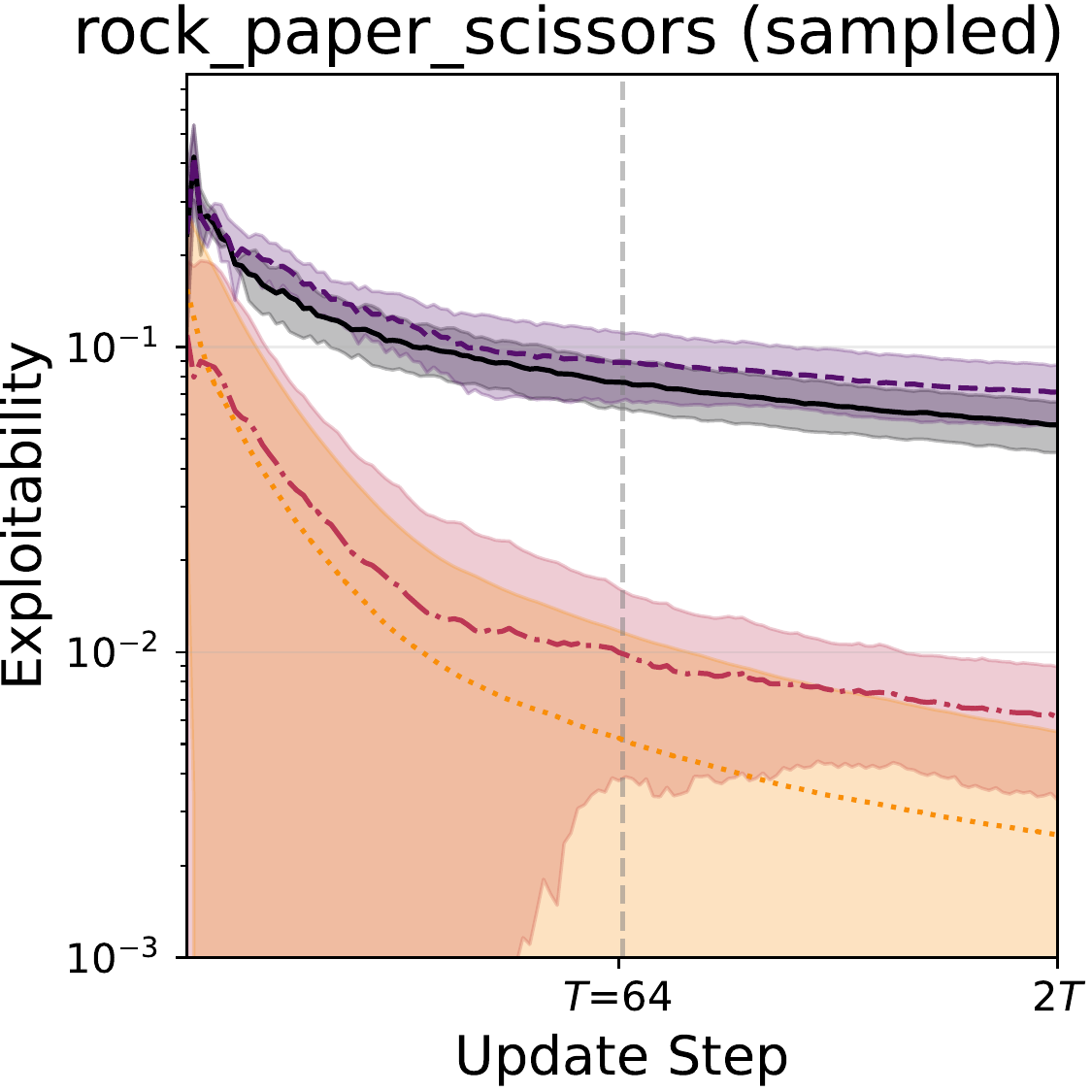}
    \includegraphics[width=0.23\textwidth]{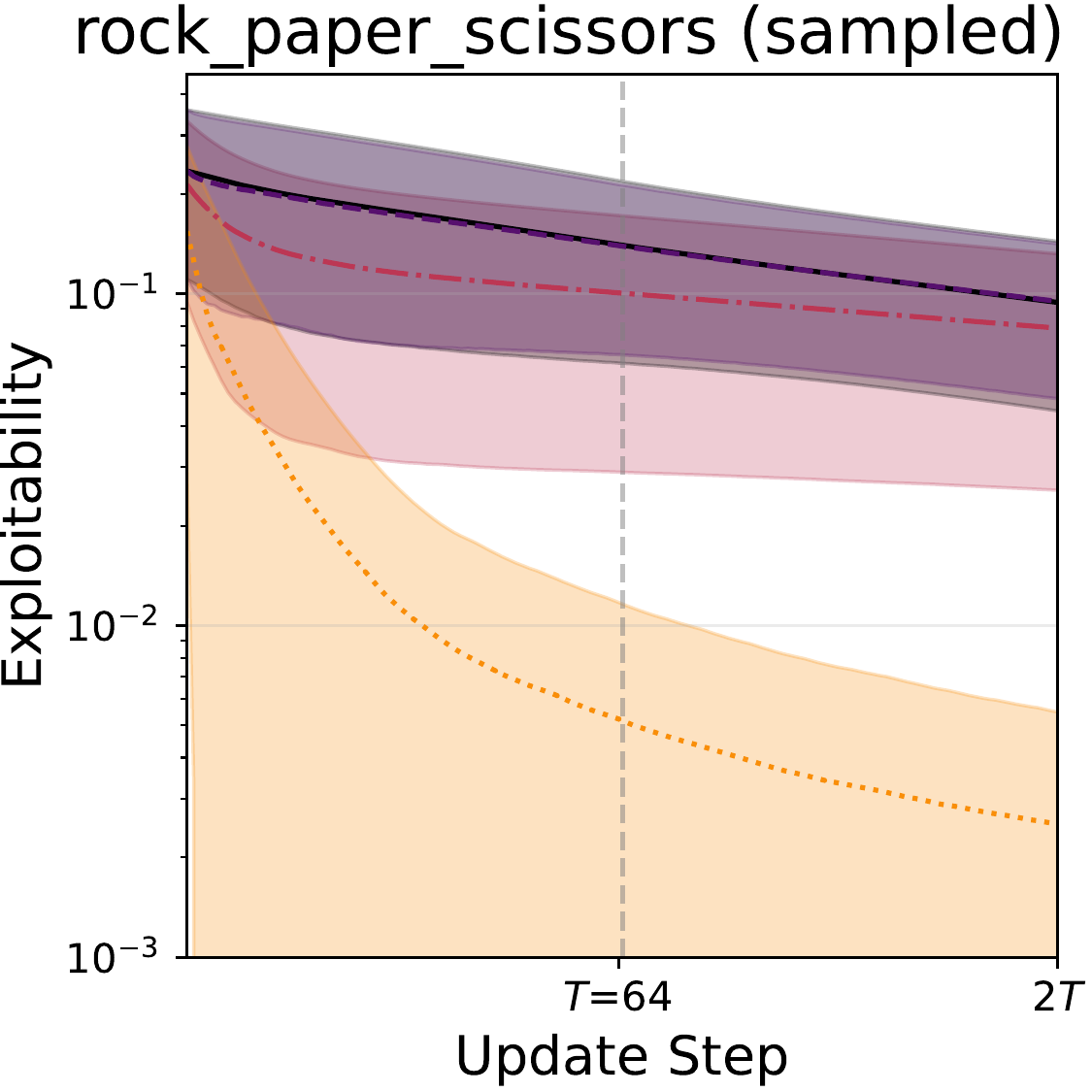}\\
    \centering
    \includegraphics[width=0.3\textwidth]{images/legend.pdf}
\caption{Comparison of regret minimization algorithms against best responding opponent. 
The figures show exploitability of the average strategy $\overline{\strategy}$. 
Vertical dashed line separates the training (up to $T$ steps) and generalization (from $T$ to $2T$ steps) regimes. 
The bottom row aggregates only positive regret, while the right column uses Hedge.}
\label{fig: additional results}
\end{figure}

In this section, we show our methods can be combined with two popular methods for regret minimization. First, similar to CFR$^+$~\citep{tammelin2014solving}, we aggregate only positive parts of the regret $\vv{R}^t = [\vv{R}^{t-1}+\vv{r}^t]^+$.
Second, we use Hedge~\citep{freund1997decision} to produce the strategy
\begin{equation*}
    \vv{\strategy}^t = 
    \frac{e^{\beta(R^t + p^t)}}{\sum_{a\in A}e^{\beta(R^t + p^t)}},
    \hspace{3ex}
    \beta = \sqrt{\frac{2\log(|A|)}{T}},
\end{equation*}
for NPRM, RM and PRM. We train and evaluate all algorithms on {\tt rock\_paper\_scissors}~(sampled). Our results are presented in Figure~\ref{fig: additional results}. 

Aggregating only positive regret seems to improve the performance of NPRM, and hinder NOA. Since RM$^+$ was observed to outperform RM on similar games~\citep{tammelin2014solving}, it may be the case this helps NPRM as well. In contrast, NOA receives less information through $\vv{R}$. 
Hedge exhibits slower convergence in general, and severely decreases the performance of NPRM. Interestingly, higher values of $\alpha$ perform better with Hedge. This corresponds to the fact that in order to get a strategy far from uniform, the cumulative regret needs to be large compared to when regret matching is used.

Note that it is known the temperature $\beta$ can be tuned, leading to improved performance~\citep{burch2018time}.
This can also be done within our meta-learning approach, by allowing the network to output $\beta$ directly.

\end{document}